\numberwithin{equation}{section}
\theoremstyle{plain}
\newtheorem{thm}{Theorem}[section]
\newtheorem{lem}[thm]{Lemma}
\newtheorem{prop}[thm]{Proposition}
\theoremstyle{definition}
\newtheorem{defn}[thm]{Definition}
\newtheorem{example}[thm]{Example}
\newtheorem{remark}[thm]{Remark}
\crefname{thm}{Theorem}{Theorems}
\crefname{lem}{Lemma}{Lemmas}
\crefname{prop}{Proposition}{Propositions}
\crefname{coro}{Corollary}{Corollaries}
\crefname{defn}{Definition}{Definitions}
\crefname{example}{Example}{Examples}
\crefname{remark}{Remark}{Remarks}
\definecolor{MyBlue}{cmyk}{1,0.13,0,0.63} %MidnightBlue: {0.98,0.13,0,0.43}
\definecolor{MyGreen}{cmyk}{0.91,0,0.88,0.52} %ForestGreen: {0.91,0,0.88,0.12}
\definecolor{webbrown}{rgb}{.6,0,0}
\newcommand{\sD}{\slashed{D}}
\newcommand{\I}{\mathbb{I}}
\newcommand{\R}{\mathbb{R}}
\newcommand{\C}{\mathbb{C}}
\newcommand{\Z}{\mathbb{Z}}
\newcommand{\A}{\mathcal{A}}
\newcommand{\mA}{\mathfrak{A}}
\renewcommand{\L}{\mathcal{L}}
\newcommand{\mH}{\mathcal{H}}
\newcommand{\F}{\mathcal{F}}
\newcommand{\lu}{\mathfrak{u}}
\newcommand{\G}{\mathcal{G}}
\newcommand{\Id}{\textnormal{Id}}
\newcommand{\Tr}{\textnormal{Tr}}
\newcommand{\End}{\textnormal{End}}
\newcommand{\Inn}{\textnormal{Inn}}
\newcommand{\Ker}{\textnormal{Ker}}
\renewcommand{\bar}{\overline}
\newcommand{\Sub}[1]{_{\scriptscriptstyle#1}}
\newcommand{\til}[1]{\widetilde{#1}}
\newcommand{\mattwo}[4]{
  \left(\!\!\!\begin{array}{c@{~}c}#1&#2\\#3&#4\\\end{array}\!\!\!\right)
}
\newcommand{\vectwo}[2]{
  \left(\!\!\!\begin{array}{c}#1\\#2\\\end{array}\!\!\!\right)
}
\newcommand{\matfour}[4]{
  \left(\!\!\begin{array}{c@{~}c@{~}c@{~}c}#1\\#2\\#3\\#4\\\end{array}\!\!\right)
}
\DeclareMathOperator{\Ad}{Ad}
\title{Electrodynamics from Noncommutative Geometry}
\author{Koen van den Dungen}
\author{Walter D. van Suijlekom}
\address{IMAPP, Radboud University Nijmegen, Heyendaalseweg 135, 6525 AJ Nijmegen, The Netherlands}
\email{k.vandendungen@student.science.ru.nl; waltervs@math.ru.nl}
\date{March 15, 2011}
\begin{document}

\begin{abstract}
Within the framework of Connes' noncommutative geometry, the notion of an almost commutative manifold can be used to describe field theories on compact Riemannian spin manifolds. The most notable example is the derivation of the Standard Model of high energy physics from a suitably chosen almost commutative manifold. In contrast to such a non-abelian gauge theory, it has long been thought impossible to describe an abelian gauge theory within this framework. The purpose of this paper is to improve on this point. We provide a simple example of a commutative spectral triple based on the two-point space, and show that it yields a $U(1)$ gauge theory. Then, we slightly modify the spectral triple such that we obtain the full classical theory of electrodynamics on a curved background manifold. 
\end{abstract}

\maketitle

%\tableofcontents

\section{Introduction}

The framework of Connes' noncommutative geometry \cite{connes_NCG} provides a generalization of ordinary Riemannian geometry. Within this framework, the notion of an almost commutative manifold (or an almost commutative geometry \cite{Class_IrrACG_I}) can be used to describe gauge field theories on Riemannian spin manifolds. Following a series of articles starting with \cite{CL91,CC96,CC97} this led in \cite{CCM07} to a noncommutative geometrical description of the full Standard Model of high energy physics, including the Higgs mechanism and neutrino mixing. In fact, a non-abelian $SU(N)$-Yang--Mills gauge theory can be described simply by considering matrix-valued functions on a background Riemannian manifold $M$. A key role is played by the adjoint action of the group of unitary matrices on $M$: it acts as $PSU(N)$ for $N$ the rank of the matrices. 

This approach immediately raises a problem if one wishes to describe abelian gauge theories, since $PSU(N)$ is trivial if $N=1$. In fact, it was long believed to be impossible to describe abelian gauge theories within the framework of noncommutative geometry. In this paper, we show that it is very well possible, and we construct a spectral triple ({\it i.e.}\ a noncommutative manifold) that describes a $U(1)$-gauge theory and even the full theory of electrodynamics.

In \cite[Sect.~9.3]{landi} it is claimed that for commutative algebras the gauge fields (and hence the gauge group) are trivial. 
The proof is based on the claim that the left and right action appearing in the adjoint action can be identified for a commutative algebra. Though this claim holds in the case of the canonical triple describing a Riemannian spin manifold, it need not be true for arbitrary commutative algebras. The almost commutative manifold given in \cref{sec:U1} below provides a counter-example.

This paper is organized as follows. We start by reviewing some definitions and results from noncommutative geometry, specializing to the case of almost commutative manifolds. We pay special attention to the form of the gauge group for such manifolds. 
Then, in \cref{sec:U1} we consider the product of spacetime with a two-point space, however, from a noncommutative point of view, tracing back to the early noncommutative models \cite{CL91}.
Essentially, the Riemannian geometry of the product is the usual (commutative) one, but the spin (KO) dimension is different, very similar to \cite{CCM07}.

In \cref{sec:ED} we will show how the above example can be modified to provide a description of one of the simplest examples of a gauge theory in physics, namely electrodynamics. Because of its simplicity, it helps in gaining an understanding of the formulation of gauge theories in terms of almost commutative manifolds, and it provides a first step towards an understanding of the derivation of the Standard Model from noncommutative geometry \cite{CCM07}.

\section{Spectral triples and gauge symmetry}

\subsection{Spectral triples}
\label{sec:triple}

In this section we shall briefly recall the notion of spectral triples. We shall follow the definitions as they appear in \cite[Ch.~1, \S10]{connes-marcolli}, for more details we refer to that book and the references therein. 

A {\it spectral triple} $(\A, \mH, D)$ is given by a unital 
$\ast$-algebra $\A$ represented faithfully as bounded operators on a Hilbert space $\mH$ and a self-adjoint operator $D$ (referred to as a {\it Dirac operator}) with compact resolvent and such that all commutators $[D,a]$ are bounded for $a\in\A$. Note that this implies that the $\A$-module generated by operators of the form $a[D,b]$ ($a,b \in \A$) consists of bounded operators on $\mH$. These differential one-forms will play a key role later, as they will appear as gauge fields. We set accordingly:
\begin{align*}
\Omega^1_D(\A) := \Big\{ \sum_j a_j[D,b_j] \mid a_j, b_j\in\A \Big\} .
\end{align*}

A spectral triple might have additional structure such as a $\Z_2$-grading $\gamma$ on $\mH$, making $\A$ even, and $\Omega^1_D(\A)$ odd. Correspondingly, the Hilbert space decomposes as $\mH = \mH^+ \oplus \mH^-$ into the $\pm 1$ eigenspaces of $\gamma$. In this case, we will call the spectral triple {\it even}, otherwise it is {\it odd}.

Furthermore, an (even) spectral triple has a \emph{real structure} if there is an anti-linear isomorphism $J\colon \mH\rightarrow\mH$ with $J^2 = \varepsilon$, $JD = \varepsilon' DJ$ and, if the spectral triple is even, $J\gamma = \varepsilon'' \gamma J$. The signs $\varepsilon$, $\varepsilon'$ and $\varepsilon''$ determine the \emph{KO-dimension} $n$ modulo $8$ of the spectral triple, according to %the following table.
\begin{align*}
\begin{array}{c|cccccccc}
n & 0 & 1 & 2 & 3 & 4 & 5 & 6 & 7 \\
\hline
\varepsilon & 1 & 1 & -1 & -1 & -1 & -1 & 1 & 1 \\
\varepsilon' & 1 & -1 & 1 & 1 & 1 & -1 & 1 & 1 \\
\varepsilon'' & 1 & & -1 &  & 1 & & -1 & \\
\end{array}
\end{align*}
Moreover, the action of $\A$ is required to satisfy the commutation rule 
\begin{align}
\label{eq:order0}
[a,b^0] = 0 \qquad\forall a,b\in\A ,
\end{align}
where we have defined the right action $b^0$ of $b$ on $\mH$ by
\begin{align*}
b^0 := Jb^*J^{-1} .
\end{align*}
We also require such a commutation relation for $\Omega^1_D(\A)$ with the right action of $\A$, {\it i.e.}
\begin{align}
\big[[D,a],b^0\big] = 0 \qquad\forall a,b\in\A .
\end{align}

\begin{example}
\label{ex:canon}
The motivating example for the definition of spectral triples is the \emph{canonical triple}. Let $M$ be a compact Riemannian spin manifold. We then define the canonical triple by
\begin{align*}
\big(\A, \mH, D\big) = \big(C^\infty(M),L^2(M,S),\sD\big) ,
\end{align*}
where $S$ is the spinor bundle on $M$ and $\sD$ is the canonical Dirac operator given locally by $-i\gamma^\mu\nabla^S_\mu$. Here, $\nabla^S$ is the Levi--Civita connection lifted to the spinor bundle.
Due to the property $ [\sD,a] = -i\, c(\mathrm d a)$, we can identify the differential one-forms $\Omega^1_\sD(C^\infty(M))$ with DeRham differential one-forms (via Clifford multiplication $c$). 

If $M$ is even dimensional (say of dimension $m$), we have a $\Z_2$-grading $\gamma_{m+1}$ 
%We will take $M$ to be $4$-dimensional, and we then have a $\Z_2$-grading $\gamma_5$ 
and an anti-linear isometry $J_M$, which is the charge conjugation operator on the spinors. The Riemannian spin manifold $M$ can be fully described by this canonical triple \cite{connes_foundation,C08}. 
\end{example}

Another special case of a spectral triple is a real even finite spectral triple, given by the data
\begin{align*}
F := \left( \A_F, \mH_F, D_F, \gamma_F, J_F \right) ,
\end{align*}
for a finite dimensional Hilbert space $\mH_F$. The operators $D_F$, $\gamma_F$ and $J_F$, as well as the action of the algebra $\A_F$, are now simply given by matrices acting on $\mH_F$, subject to the aforementioned (anti-)commutation relations. As a first result, we prove
\begin{lem}
\label{lem:class_real}
For any real even finite spectral triple $F$, we can write with respect to the decomposition $\mH = \mH^+ \oplus \mH^-$:
\begin{align*}
\textnormal{KO-dimension }0\colon\quad J_F &= \mattwo{j_+}{0}{0}{j_-} C \quad \textnormal{for symmetric } j_\pm\in U(\mH^\pm) ;\\
\textnormal{KO-dimension }2\colon\quad J_F &= \mattwo{0}{j}{-j^T}{0} C \quad \textnormal{for } jj^* = j^*j = \I ;\\
\textnormal{KO-dimension }4\colon\quad J_F &= \mattwo{j_+}{0}{0}{j_-} C \quad \textnormal{for anti-symmetric } j_\pm\in U(\mH^\pm) ;\\
\textnormal{KO-dimension }6\colon\quad J_F &= \mattwo{0}{j}{j^T}{0} C \quad \textnormal{for } jj^* = j^*j = \I .
\end{align*}
\end{lem}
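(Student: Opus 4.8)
The plan is to write $J_F = U C$ for some operator $U$ on $\mH_F$ (this is possible since $C$ is an anti-linear involution and $J_F$ is anti-linear), and then translate each of the defining relations $J_F^2 = \varepsilon$, $J_F \gamma_F = \varepsilon'' \gamma_F J_F$ (together with the implicit requirement that $J_F$ be an anti-unitary, so $U \in U(\mH_F)$) into algebraic conditions on $U$. Since $C$ is complex conjugation in the chosen basis, $C a C = \bar a$ for a matrix $a$, so $J_F^2 = UC UC = U \bar U$, and $J_F^2 = \varepsilon$ becomes $U \bar U = \varepsilon \I$, i.e.\ $U^T = \varepsilon U$ (using $\bar U = U^{-1} U \bar U = \varepsilon U^{-1}$ together with unitarity $U^{-1} = U^*$, so $\bar U = \varepsilon (U^*)^{-1}\cdots$ — more directly, $\bar U = \varepsilon U^{-1}$ and $U^{-1} = U^\dagger$ gives $U^T = \overline{U^\dagger} = (\bar U)^* \cdots$; the clean statement is $U^T = \varepsilon U$). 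So $U$ is symmetric when $\varepsilon = +1$ (KO-dimension $0,4$) and antisymmetric when $\varepsilon = -1$ (KO-dimension $2,6$).

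Next I would bring in the grading. Since $\gamma_F$ is the diagonal operator $\mathrm{diag}(\I_{\mH^+}, -\I_{\mH^-})$ and $C$ commutes with $\gamma_F$ (the grading being real in the chosen basis), the relation $J_F \gamma_F = \varepsilon'' \gamma_F J_F$ becomes $U \gamma_F = \varepsilon'' \gamma_F U$. For $\varepsilon'' = +1$ (KO-dimension $0, 4$) this says $U$ is block-diagonal, $U = \mathrm{diag}(j_+, j_-)$; combined with the transpose condition from the previous paragraph, $j_\pm$ are symmetric (KO $0$) or antisymmetric (KO $4$) unitaries on $\mH^\pm$. For $\varepsilon'' = -1$ (KO-dimension $2, 6$) it says $U$ is block-off-diagonal, $U = \mattwo{0}{j}{k}{0}$ with $j \colon \mH^- \to \mH^+$ and $k \colon \mH^+ \to \mH^-$; unitarity of $U$ forces $j, k$ to be unitaries with $k = $ (up to the relation below) determined by $j$. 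Imposing $U^T = \varepsilon U$: for KO-dimension $2$ ($\varepsilon = -1$) we get $k = -j^T$, and for KO-dimension $6$ ($\varepsilon = +1$) we get $k = j^T$, giving exactly the stated matrix forms, with the unitarity conditions $jj^* = j^*j = \I$.

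Finally I would check the conditions $J_F D_F = \varepsilon' D_F J_F$ and the order-zero and order-one conditions are not needed for this lemma — the statement only classifies $J_F$ relative to $\gamma_F$, so the Dirac operator plays no role here, and one simply observes that the signs $\varepsilon, \varepsilon''$ used above are exactly those read off from the KO-dimension table in the text. The one point requiring a little care is the passage between "$\bar U = \varepsilon U^{-1}$ together with $U$ unitary" and "$U^T = \varepsilon U$": since $U$ unitary means $U^{-1} = U^*$, we have $\bar U = \varepsilon U^* $, and taking the transpose of both sides gives $U^\dagger \cdots$; cleanly, $\overline{U} = \varepsilon U^*$ is equivalent to $U^T = \varepsilon U$ after conjugating. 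This is the only genuinely fiddly step; everything else is a direct block-matrix computation. I expect the main obstacle to be purely bookkeeping: making sure the four sign combinations $(\varepsilon, \varepsilon'')$ from the table are matched to the four stated normal forms without transcription errors, and handling the anti-linearity of $J_F$ consistently (in particular that $J_F a J_F^{-1} = U \bar a U^{-1}$, which is what makes $C$ the right reference anti-involution to factor out).
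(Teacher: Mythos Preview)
Your approach is correct and essentially identical to the paper's: write $J_F = UC$ with $U$ unitary, then impose $U\bar U = \varepsilon$ and $U\gamma_F = \varepsilon''\gamma_F U$ and read off the block structure case by case. One slip to fix in your first paragraph: you label $\varepsilon = +1$ as ``KO-dimension $0,4$'' and $\varepsilon = -1$ as ``KO-dimension $2,6$'', but the sign table gives $\varepsilon = +1$ for $n=0,6$ and $\varepsilon = -1$ for $n=2,4$; you quietly use the correct values in the second paragraph, so your final normal forms come out right.
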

\begin{proof}
Let the operator $C$ denote complex conjugation. Then any anti-unitary operator $J_F$ can be written as $UC$, where $U$ is some unitary operator on $\mH_F$. We then have $J_F^* = CU^* = U^TC$, and $J_FJ_F^* = UU^* = \I$. 
The different possibilities for the choice of $J_F$ are characterized by the relations $J_F^2 = UCUC = U\bar U = \varepsilon$ and $J_F\gamma_F = \varepsilon''\gamma_FJ_F$. By inserting $\varepsilon,\varepsilon''=\pm1$ according to the KO-dimension, the exact form of $J_F$ can be directly computed by imposing these relations. 
\end{proof}

We now combine the canonical triple for a spin manifold $M$ with the finite spectral triple $F$ to arrive at the noncommutative manifolds that are of particular interest in the context of particle physics.
\begin{defn} 
A \emph{real even almost commutative (spin) manifold} $M \times F$ is described by
\begin{align*}
\left(\A, \mH, D\right) := \left( C^\infty(M,\A_F), L^2(M,S)\otimes \mH_F, \sD\otimes \I + \gamma_{m+1}\otimes D_F \right) ,
\end{align*}
together with a grading $\gamma = \gamma_{m+1}\otimes\gamma_F$ and a real structure $J = J_M \otimes J_F$. 
\end{defn}

\subsection{The gauge group}
\label{sec:gauge_group}

We would like to study the notion of `symmetry' for almost commutative manifolds. The starting point is to define an equivalence of spectral triples. The symmetry is then revealed when it turns out that the bosonic and fermionic action functionals of a spectral triple are identical for equivalent spectral triples (see \cref{sec:act_funct}). We take our definition of equivalent spectral triples from \cite{C96} (cf.\ \cite[\S6.9]{landi}) but make a slight modification by incorporating the algebra isomorphism $\alpha$.

\begin{defn}
Two spectral triples $(\A_1, \mH_1, D_1)$ and $(\A_2, \mH_2, D_2)$, with the associated representations $\pi_j\colon\A_j\rightarrow B(\mH_j)$ for $j=1,2$, are \emph{unitarily equivalent} if there exists a unitary operator $U\colon\mH_1\rightarrow\mH_2$, called the \emph{intertwining operator}, such that 
$$
UD_1U^* = D_2; \qquad U\pi_1(a)U^* = \pi_2(\alpha(a)); \qquad (a \in \A),
$$
where $\alpha$ is an algebra isomorphism $\A_1\rightarrow\A_2$. 

If the two triples are even with grading operators $\gamma_1$ and $\gamma_2$, one also requires that $U\gamma_1U^* = \gamma_2$. If the two triples are real with real structures $J_1$ and $J_2$, one also requires that $UJ_1U^* = J_2$. 
\end{defn}

Note that for a discussion of the equivalence of spectral triples, it is good to explicitly mention the representation of the algebra on the Hilbert space, since the intertwining operator affects this representation. Let us now consider two basic examples of intertwining operators.

\begin{prop}
\label{prop:equiv_triples}
The following two spectral triples are equivalent to $(\A, \mH, D, \gamma, J)$ with representation $\pi\colon\A\rightarrow B(\mH)$:
\begin{enumerate}
\item $(\A, \mH, UDU^*, \gamma, UJU^*)$ with representation $\pi\circ\alpha_u$ for $U=\pi(u)$ with $u\in U(\A)$;
\item $(\A, \mH, UDU^*, \gamma, J)$ with representation $\pi\circ\alpha_u$ for $U=\pi(u)J\pi(u)J^*$ with $u\in U(\A)$,
\end{enumerate}
where $\alpha_u$ is the inner automorphism of $\A$ given by $\alpha_u(a) := uau^*$. 
\end{prop}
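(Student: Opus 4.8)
The plan is to verify directly that the unitaries $U$ in each case satisfy the defining relations of a unitary equivalence, with $\alpha = \alpha_u$ the relevant inner automorphism. For case (1), I would first note that $U = \pi(u)$ is indeed unitary on $\mH$ since $u \in U(\A)$ and $\pi$ is a $\ast$-representation. The relation $U D_1 U^* = D_2$ holds by construction since $D_2 := UDU^*$, and similarly $U\gamma U^* = \gamma$ (the conjugating unitary is $\pi(u)$, which commutes with $\gamma$ because $\pi(\A)$ is even, so $U\gamma U^* = \gamma$ as required) and $UJU^* = UJU^*$ matches the stated real structure. The only substantive point is the intertwining of the representations: I must check that $U\pi(a)U^* = \pi(\alpha_u(a))$, i.e. $\pi(u)\pi(a)\pi(u)^* = \pi(uau^*)$, which is immediate from multiplicativity of $\pi$ and $\pi(u)^* = \pi(u^*)$.

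For case (2), the key object is $U = \pi(u) J \pi(u) J^*$; by the order-zero condition \eqref{eq:order0}, $\pi(u)$ and $\pi(u)^0 = J\pi(u^*)J^{-1}$ commute, so $U$ is a product of two commuting unitaries and hence unitary. Again $UD_1U^* = D_2$ and $U\gamma U^* = \gamma$ hold by construction and by evenness. The new feature is the claim that this $U$ leaves $J$ invariant, $UJU^* = J$: I would compute $UJ = \pi(u)J\pi(u)J^*J = \pi(u)J\pi(u)$ and $JU^* = J\,(J\pi(u^*)J^{-1})^*\pi(u^*) = J\,J\pi(u)J^{-1}\pi(u^*)$, and then use $J^2 = \varepsilon$ together with antilinearity of $J$ to see these combine to give $JU^*$ matching $UJ$, so that $UJU^{-1} = J$. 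This is the one computation that needs a little care with the sign $\varepsilon$ and the placement of $J^{-1}$ versus $J^*$; since $J$ is antiunitary, $J^{-1} = \varepsilon J = \varepsilon J^*$, and the $\varepsilon$ factors cancel in pairs. Finally the representation intertwining $U\pi(a)U^* = \pi(\alpha_u(a))$ follows because the extra factor $J\pi(u)J^*$ commutes with $\pi(a)$ by the order-zero relation, leaving only the contribution of $\pi(u)$ as in case (1).

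The main obstacle, such as it is, will be bookkeeping with $J$ in case (2): correctly tracking antilinearity, the distinction between $J^{-1}$, $J^*$, and $\varepsilon J$, and confirming that the grading condition survives (one should note $J\gamma = \varepsilon''\gamma J$ and that $\pi(u)^0$ commutes with $\gamma$ as well, so $U\gamma U^* = \gamma$ genuinely holds). I do not anticipate any deep difficulty — the proposition is essentially a packaging of the fact that conjugation by $\pi(u)$ and by $\pi(u)J\pi(u)J^*$ implements inner automorphisms while respecting all the extra structure — so the proof will be a short sequence of direct verifications rather than anything requiring a new idea.
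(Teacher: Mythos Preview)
Your proposal is correct and follows essentially the same route as the paper's proof: a direct verification of the intertwining conditions, using evenness of $\pi(\A)$ for the grading, the order-zero relation $[a,b^0]=0$ to reduce $U\pi(a)U^*$ to $\pi(u)\pi(a)\pi(u^*)$ in case~(2), and the identity $\pi(a)^0 J = J\pi(a^*)$ together with $J^2=\varepsilon$ for $UJU^*=J$. The paper compresses the grading check in case~(2) to the single observation $\pi(u)J\pi(u)J^*\gamma = (\varepsilon'')^2\gamma\,\pi(u)J\pi(u)J^*$, which is exactly the computation you sketch in your final paragraph.
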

\begin{proof}
\begin{enumerate}
\item We only need to check that $U\pi(a)U^* = \pi\circ\alpha_u(a)$ and $U\gamma U^* = \gamma$. The latter relation is evident since the grading operator $\gamma$ commutes with the algebra. We also see that 
\begin{align*}
U\pi(a)U^* = \pi(u)\pi(a)\pi(u)^* = \pi(uau^*) = \pi\circ\alpha_u(a) . 
\end{align*}
\item First, we easily see from \eqref{eq:order0} that $U \equiv \pi(u) \pi(u^*)^0$ is a unitary operator. 
%$UU^* = uJuJ^*(uJuJ^*)^* = uJuJ^*Ju^*J^*u^* = 1$ and similarly $U^*U=1$, so $U$ is indeed a unitary operator. 
The relation $U\gamma U^* = \gamma$ holds since $\pi(u)J\pi(u)J^*\gamma = (\varepsilon'')^2\gamma \pi(u)J\pi(u)J^*$. Since $\pi(u^*)^0$ commutes with $\pi(a)$, we find that 
\begin{align*}
U\pi(a)U^* &= %\pi(u)J\pi(u)J^*\pi(a)J\pi(u)^*J^*\pi(u)^* = \pi(u)\pi(a)J\pi(u)J^*J\pi(u)^*J^*\pi(u)^* \\ &=
\pi(u)\pi(a)\pi(u)^* %= \pi(uau^*) 
= \pi\circ\alpha_u(a) .
\end{align*}
Using the property $\pi(a)^0 J = J \pi(a^*)$ for all $a \in \A$, one can check that $UJU^* = J$.
%\begin{align*}
%UJU^* = %\pi(u)J\pi(u)J^*JJ\pi(u)^*J^*\pi(u)^* = \pi(u)J\pi(u)J\pi(u)^*J^*\pi(u)^* \\
%&= \pi(u)J\pi(u)\pi(u)^*J\pi(u)^*J^* = 
%\pi(u)JJ\pi(u)^*J^* = \varepsilon J^* = J . 
%\end{align*}
\qedhere
\end{enumerate}
\end{proof}

In the first case of \cref{prop:equiv_triples}, the intertwining operator $U$ is given by left multiplication with an element of the unitary group $U(\A)$. In the second case, the action of the operator $U$ on a vector $\xi\in\mH$ can be written as $U\xi = u\xi u^*$, since we identify $JuJ^*$ with the right action of $u^*$. This case is especially interesting because we see that the intertwining operator has no effect on $J$. Thus, the group generated by all operators of the form $U=uJuJ^*$ characterizes all equivalent spectral triples $(\A, \mH, UDU^*, \gamma, J)$, in which only the Dirac operator is affected by the unitary transformation. 
\begin{defn}
\label{defn:gauge_group_NCG}
The \emph{gauge group} $\G(\A)$ of a real spectral triple $(\A, \mH, D, J)$ is defined by
\begin{align*}
\G(\A) := \left\{ U=uJuJ^* \mid u\in U(\A) \right\} .
\end{align*}
\end{defn}

Before we continue to evaluate the exact form of this gauge group, we first consider the following subalgebras of $\A$:
\begin{align}
%\label{eq:subalg1}
\A_J &:= \big\{ a\in\A \mid aJ = Ja \big\} %= \big\{ a\in\A \mid a^0 = a^* \big\}
,\nonumber \\
\label{eq:subalg2}
\til\A_J &:= \big\{ a\in\A \mid aJ = Ja^* \big\} %= \big\{ a\in\A \mid a^0 = a \big\} 
.
\end{align} 
The definition of $\A_J$ is taken from \cite[Prop.~3.3]{CCM07}  (cf.\ \cite[Prop.~1.125]{connes-marcolli}); it is a {\it real} commutative subalgebra in the center of $\A$. We have provided a similar but different definition for $\til\A_J$, since this subalgebra will turn out to be very useful for the description of the gauge group in \cref{prop:gauge_group_NCG}. Note that $aJ = Ja^*$ if and only if $a = a^0$, {\it i.e.} if and only if its left and right action on $\mH$ coincide.

\begin{prop}
\label{prop:subalg}
For a complex algebra $\A$, the subalgebra $\til\A_J$ is an involutive commutative complex subalgebra of the center of $\A$. 
\end{prop}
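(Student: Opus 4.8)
The plan is to exploit the reformulation recorded in the remark just above the statement: $a \in \til\A_J$ if and only if $a = a^0$, where $a^0 = Ja^*J^{-1}$ is the right action of $a$. So the first step is to record the elementary algebraic properties of the map $a \mapsto a^0$ on $\A$. The anti-linearity of $J$ cancels against the anti-linearity of the involution $*$, so $a \mapsto a^0$ is \emph{complex-linear}; a short computation using that $J$ is anti-unitary (so $J^* = J^{-1}$) gives $(ab)^0 = b^0 a^0$, $(a^*)^0 = (a^0)^*$ and $(a^0)^0 = a$. Thus $a \mapsto a^0$ is a complex-linear, anti-multiplicative involution of $\A$, and $\til\A_J = \{a \in \A : a^0 = a\}$ is the fixed-point set of a complex-linear map, hence a complex-linear subspace; moreover it contains the unit since $1^0 = 1$.

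Next I would show that $\til\A_J$ lies in the center of $\A$; this is the step where the order-zero condition \eqref{eq:order0} enters, and to my mind it is the only part that is not purely formal. If $a \in \til\A_J$, then $a^0 = a$, and \eqref{eq:order0} gives $[x, a^0] = 0$ for every $x \in \A$, i.e. $[x, a] = 0$ for all $x \in \A$. Hence $\til\A_J \subseteq Z(\A)$, and in particular $\til\A_J$ is commutative.

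With centrality available, closure under multiplication is immediate: for $a, b \in \til\A_J$ one has $(ab)^0 = b^0 a^0 = ba = ab$, using anti-multiplicativity of $a \mapsto a^0$ together with the commutativity just established, so $ab \in \til\A_J$. This promotes the linear subspace to a unital subalgebra. Finally, $\til\A_J$ is involutive because $(a^*)^0 = (a^0)^* = a^*$ whenever $a^0 = a$, so $a^* \in \til\A_J$.

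The one genuine subtlety — presumably the reason this is isolated as a proposition rather than a one-line remark — is that closure under multiplication is \emph{not} automatic from the defining relation alone: since $a \mapsto a^0$ reverses the order of products, $ab$ lies in $\til\A_J$ only once one knows $ab = ba$, and that commutativity is exactly what the order-zero axiom supplies. I expect everything else to reduce to careful but routine bookkeeping with the anti-linear adjoint of $J$, with the sign $\varepsilon = J^2 = \pm 1$ dropping out of $(a^0)^0 = J^2 a J^{-2} = a$.
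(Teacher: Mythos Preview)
Your proof is correct and follows essentially the same route as the paper: use the order-zero condition $[a,b^0]=0$ with $b^0=b$ to get centrality, note that $a\mapsto a^0$ is complex-linear so $\til\A_J$ is a complex subspace, check $(a^*)^0=(a^0)^*=a^*$ for closure under $*$, and use $(ab)^0=b^0a^0=ba=ab$ for closure under multiplication. Your write-up is more explicit about the algebraic properties of $a\mapsto a^0$ and about why commutativity is needed for multiplicative closure, but the argument is the same.
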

\begin{proof}
Since we must have $[a,b^0] = 0$ for any $a,b \in\A$, we have $[a,b] = 0$ for any $a\in\A$ and $b\in\til\A_J$, so $\til\A_J$ is contained in the center of $\A$. The requirement $a = a^0$ is complex linear, and also implies that $a^* = (a^0)^* = (a^*)^0$, so we have $a^*\in\til\A_J$ for $a\in\til\A_J$. Finally, we check that for $a,b\in\til\A_J$, we find $(ab)^0 = b^0a^0 = ba = ab$, so $ab\in\til\A_J$. \qedhere
\end{proof}

\begin{prop}
\label{prop:gauge_group_NCG} 
There is a short exact sequence of groups
\begin{align*}
1\rightarrow U(\til\A_J)\rightarrow U(\A)\rightarrow \G(\A)\rightarrow1 ,
\end{align*}
where $\til\A_J$ is defined in \eqref{eq:subalg2}. %In other words, the gauge group $\G(\A)$ is isomorphic to the quotient $U(\A) / U(\til\A_J)$.
\end{prop}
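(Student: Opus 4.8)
The plan is to realise the surjection in the desired sequence as the map
\begin{align*}
\varphi\colon U(\A)\to U(\mH), \qquad u\mapsto uJuJ^* ,
\end{align*}
which by \cref{defn:gauge_group_NCG} has image exactly $\G(\A)$. Granting that $\varphi$ is a group homomorphism, the first isomorphism theorem then produces the short exact sequence $1\to\Ker\varphi\to U(\A)\to\G(\A)\to1$, so the whole statement reduces to the two claims that $\varphi$ is a homomorphism and that $\Ker\varphi = U(\til\A_J)$; in particular the normality of $U(\til\A_J)$ in $U(\A)$ needs no separate argument, as it is exhibited as a kernel.

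First I would check that $\varphi$ lands in $U(\mH)$ and is multiplicative. Since $J$ is anti-unitary we have $J^*=J^{-1}$, so $JuJ^*$ equals $JuJ^{-1} = (u^*)^0$, the right action of $u^*$; by the order-zero relation \eqref{eq:order0} it therefore commutes with every left action. Consequently $\varphi(u) = u\,(JuJ^*)$ is a product of two commuting unitaries, hence unitary, and
\begin{align*}
\varphi(u)\varphi(v) = uJuJ^*\,vJvJ^* = uv\,(JuJ^*)(JvJ^*) = uv\,J(uv)J^* = \varphi(uv) ,
\end{align*}
where the second equality moves $v$ past $JuJ^*$ using \eqref{eq:order0}, and the third uses $J^*J=\I$ together with $\pi(u)\pi(v)=\pi(uv)$.

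It then remains to compute the kernel. If $u\in U(\A)$ satisfies $uJuJ^*=\I$, then multiplying on the right by $J$ and using $J^*J=\I$ gives $uJu=J$, hence $Ju = u^{-1}J = u^*J$; taking adjoints (and recalling $J^*=J^{-1}$) yields $u^*J^{-1}=J^{-1}u$, i.e.\ $u = Ju^*J^{-1} = u^0$. By the observation recorded after \eqref{eq:subalg2} this is precisely the condition $u\in\til\A_J$, and conversely, reversing these steps shows that every unitary lying in $\til\A_J$ belongs to $\Ker\varphi$. Since $\til\A_J$ is a unital $\ast$-subalgebra of $\A$, a unitary of $\A$ contained in $\til\A_J$ is the same thing as a unitary of $\til\A_J$, so $\Ker\varphi = U(\til\A_J)$ and the sequence is exact.

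The argument is a short direct computation; the only place demanding care is the bookkeeping with the anti-unitary $J$ — keeping straight the identities $J^*=J^{-1}$ and $JuJ^* = (u^*)^0$, and the passage between the equivalent conditions $uJ=Ju^*$, $Ju=u^*J$ and $u=u^0$ — together with invoking the order-zero condition \eqref{eq:order0} at the right moment to commute the left and right actions in the homomorphism step.
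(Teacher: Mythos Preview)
Your proof is correct and follows essentially the same route as the paper: define the map $u\mapsto uJuJ^* = u(u^*)^0$, use the order-zero condition \eqref{eq:order0} to see it is a group homomorphism onto $\G(\A)$, and identify its kernel with the unitaries satisfying $u=u^0$, i.e.\ $U(\til\A_J)$. Your write-up is more explicit (spelling out the multiplicativity and the kernel computation via the anti-unitary bookkeeping), but the argument is the same.
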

\begin{proof}
The map $\Ad\colon U(\A)\rightarrow \G(\A)$ given by $u\mapsto u (u^*)^0$ is surjective by definition. The commutation relation \eqref{eq:order0} implies that $\Ad$ is a group homomorphism. Its kernel is given by elements $u \in U(\A)$ for which $u (u^*)^0 = 1$. In other words, $\Ker \Ad$ consists of all unitary elements in $\tilde \A_J$.
%which
%, with kernel 
%Moreover, $\Ad$ is a group homomorphism, since the commutation relation $[a,JbJ^*]=0$ implies that $\phi(b)\phi(a) = bJbJ^*aJaJ^* = baJbaJ^* = \phi(ba)$. This map has kernel 
%$\Ker(\Ad) = \{u\in U(\A) \mid uJuJ^*=1\}$. 
%The relation $uJuJ^*=1$ is equivalent to $uJ=Ju^*$, and we note that this 
%is the defining relation of the commutative subalgebra $\til\A_J$ of \eqref{eq:subalg2}. ence we have $\Ker(\phi) = U(\til\A_J)$. 
\end{proof}

From \cref{prop:subalg} we know that $\til\A_J$ is a subalgebra of the center of $\A$. If we denote by $Z$ the subgroup of $U(\A)$ that commutes with $\A$, then the group $U(\til\A_J)$ of the previous proposition is contained in $Z$. The quotient $U(\A) / Z$ yields the group $\Inn(\A)$ of inner automorphisms of the algebra. Proposition \ref{prop:gauge_group_NCG} then implies that in general, the gauge group $\G(\A)$ is larger than $\Inn(\A)$. If $U(\til\A_J)$ is \emph{equal} to $Z$, we have in fact $\Inn(\A) \simeq \G(\A)$.

\subsection{Inner fluctuations and gauge transformations}

In this section we will first define the inner fluctuations of a spectral triple. These inner fluctuations arise from considering Morita equivalences between algebras. For a detailed discussion, we refer to \cite{C96} or \cite[Ch.~1, \S10.8]{connes-marcolli}. In this section, we will simply give the definition. 

Recall the Connes' differential one-forms $\Omega^1_D(\A)$, spanned by operators of the form $a[D,b]$ (with $a,b \in \A$). 
For a real spectral triple (endowed with a real structure $J$) we may replace $D$ by 
\begin{align*}
D_A := D + A + \varepsilon'JAJ^{-1} ,
\end{align*}
for a selfadjoint $A=A^*\in\Omega^1_D$. The elements $A$ are called the \emph{inner fluctuations} of the spectral triple. 

%% Note that in the case $\A = C^\infty(M)$ and $D=\sD$ we have the \emph{commutation relation}
%% \begin{equation}
%% [\sD,a] = -i\; c(da)
%% \end{equation}
%% for all $a\in\A$, so that $\Omega^1_\sD$ is given by the Clifford representation of the $1$-forms $\A^1(M)$. The elements of $\Omega^1_D$ for a general Dirac operator $D$ are therefore regarded as a generalization of $1$-forms. They will be interpreted as gauge potentials or gauge fields. 

In Proposition \ref{prop:equiv_triples} we have seen that an element $U=uJuJ^*\in \G(\A)$ transforms the Dirac operator as $D\rightarrow UDU^*$. Let us now consider the effect of this transformation on the fluctuated Dirac operator $D_A = D + A + \varepsilon'JAJ^*$. A direct calculation shows %\cite[Prop.~5.30]{mythesis} 
that $D_A \mapsto UD_A U^*$ is equivalent to a transformation on $A$ of the form
\begin{align*}
%UD_AU = D_{A^u} , \quad\text{for } 
A^u := uAu^* + u[D,u^*] \in \Omega^1_D .
\end{align*}
In other words, the transformation of a fluctuated Dirac operator can again be written in the form of a fluctuated Dirac operator. This only works because we have restricted $U(\A)$ to the gauge group $\G(\A)$, to make sure that the conjugation operator $J$ remains unchanged. The resulting transformation on the inner fluctuation $A\rightarrow A^u$ shall be interpreted in physics as the gauge transformation of the gauge field.

\subsection{The action functional}
\label{sec:act_funct}
In the previous sections, we have recalled spectral triples and their symmetries. It is now time to introduce interesting invariant functionals on them. 
\begin{defn}[Chamseddine--Connes \cite{CC96}]
\label{defn:spectral_action}
Let $(\A,\mH,D)$ be a spectral triple as above. The {\it spectral action} of a real spectral triple is defined by
\begin{align*}
S_b[A] := \Tr \left(f\Big(\frac{D_A}{\Lambda}\Big)\right) ,
\end{align*}
where $f$ is a positive even function, $\Lambda$ is a cut-off parameter and $D_A$ is the fluctuated Dirac operator.
\end{defn}

The spectral action describes only the action for the (bosonic) gauge fields. For the terms involving fermions and their coupling to the bosons, we need something extra. The precise form of the fermionic action depends on the KO-dimension of the spectral triple. For the purpose of this paper, we will only consider the case of KO-dimension $2$ and give the fermionic action for this case. Referring to the sign table in \cref{sec:triple}, we thus have the relations
\begin{align*}
J^2 &= -1 , & JD &= DJ , & J\gamma &= -\gamma J .
\end{align*}
We use the decomposition $\mH = \mH^+ \oplus \mH^-$ by the grading $\gamma$. Following \cite{CCM07} (cf.\ \cite[Ch.~1, \S16.2-3]{connes-marcolli}) the relations above yield a natural construction of an anti-symmetric bilinear form on $\mH^+$, given for $\xi,\xi'\in\mH^+$ by
\begin{align*}
\mA_D(\xi,\xi') = \langle J\xi,D\xi'\rangle ,
\end{align*}
where $\langle\;,\;\rangle$ is the inner product on $\mH$. 
We define the set of \emph{classical fermions} corresponding to $\mH^+$, 
\begin{align*}
\mH^+_\text{cl} := \{\til\xi \mid \xi\in\mH^+\} ,
\end{align*}
as the set of Grassmann variables $\til\xi$ for $\xi\in\mH^+$. %Assuming that the Hilbert space is separable and has a basis $\{e_j\}$, we can write $\xi = \sum_j \xi_j e_j$. The Grassmann variable $\til\xi$ is then obtained by making every component $\xi_j$ into a Grassmann variable $\til{\xi_j}$, so $\til\xi=\til{\xi_j}e_j$. If the Hilbert space is of the form $L^2(M,E)$ for a vector bundle $E\rightarrow M$ of rank $k$, then there locally exists a frame $\{e_1,\ldots,e_k\}$ such that $\xi(x) = \sum_{j=1}^k \xi_j(x) e_j(x)$. In that case we obtain the Grassmann variable $\til\xi(x) = \sum_{j=1}^k \til{\xi_j}(x) e_j(x)$.

\begin{defn}
\label{defn:fermion_act}
For a real even spectral triple $(\A, \mH, D, \gamma, J)$ of KO-dimension $2$ we define the full \emph{action functional} by
\begin{align*}
S[A,\xi] := S_b[A] + S_f[A,\xi] := \Tr \left(f\Big(\frac{D_A}{\Lambda}\Big)\right) + \frac12 \langle J\til\xi,D_A\til\xi\rangle ,
\end{align*}
for $\til\xi\in\mH^+_\text{cl}$. 
\end{defn}
The factor $1/2$ in front of the \emph{fermionic action} $S_f$ has been chosen for future convenience. 
The action functional $S[A,\tilde\xi]$ is invariant under unitary transformations; in fact, it is invariant under transformations of the gauge group $\G(\A)$. 

Note that we have incorporated two restrictions in the fermionic action $S_f$. The first is that we restrict ourselves to even vectors in $\mH^+$, instead of considering all vectors in $\mH$. The second restriction is that we do not consider the inner product $\langle J\til\xi',D_A\til\xi\rangle$ for two independent vectors $\xi$ and $\xi'$, but instead use the same vector $\xi$ on both sides of the inner product. Each of these restrictions reduces the number of degrees of freedom in the fermionic action by a factor $2$, yielding a factor $4$ in total. It is precisely this approach that solves the problem of fermion doubling pointed out in \cite{lizzi} (see also the discussion in \cite[Ch.~1, \S16.3]{connes-marcolli}).

\subsubsection{The heat expansion}

For future purpose, let us recall some results on heat kernel expansions. For more details we refer the reader to \cite{Gil84}. Suppose we have a vector bundle $E$ on a compact Riemannian manifold $M$, and a second order differential operator $H\colon \Gamma(E)\rightarrow\Gamma(E)$ of the form $H = \Delta^E - Q$, where $\Delta^E$ is the Laplacian of some connection on $E$ and $Q\in\Gamma(\End(E))$. 
For a generalized Laplacian $H$ on $E$ we have the following asymptotic expansion (as $t \to 0$), known as the \emph{heat expansion} \cite[\S1.7]{Gil84}: 
\begin{align*}
\Tr\left(e^{-tH}\right) \sim \sum_{k\geq0} t^{\frac{k-m}2} a_k(H) ,
\end{align*}
where $m$ is the dimension of the manifold, the trace is taken over the Hilbert space $L^2(M,E)$ and the coefficients of the expansion are given by
\begin{align*}
a_k(H) := \int_M a_k(x,H) \sqrt{|g|} d^mx. 
\end{align*}
The coefficients $a_k(x,H)$ are called the Seeley-DeWitt coefficients. We also state here without proof Theorem 4.8.16 from \cite{Gil84}. Note that the conventions used by \cite{Gil84} for the Riemannian curvature $R$ are such that $g^{\mu\rho}g^{\nu\sigma}R_{\mu\nu\rho\sigma}$ is negative for a sphere, in contrast to our own conventions. Therefore we have replaced $s=-R$. Furthermore, we have used that $f_{;\mu}^{\phantom{\mu};\mu} = -\Delta f$ for $f\in C^\infty(M)$. 

\begin{thm}[\cite{Gil84}, Theorem 4.8.16]
\label{thm:seeley-dewitt}
For a generalized Laplacian $H = \Delta^E - Q$ the first three non-zero Seeley-DeWitt coefficients are given by
\begin{gather*}
a_0(x,H) = (4\pi)^{-\frac m2} \Tr(\Id); \qquad a_2(x,H) = (4\pi)^{-\frac m2} \Tr\left(\frac s6 + Q\right) \\ 
a_4(x,H) = (4\pi)^{-\frac m2} \frac1{360} 
\Tr\big(-12 \Delta s 
+ 5 s^2 - 2 R_{\mu\nu} R^{\mu\nu} + 2 R_{\mu\nu\rho\sigma} R^{\mu\nu\rho\sigma} \\
\hspace{5cm} + 60 sQ + 180 Q^2 - 60 \Delta Q + 30 \Omega^E_{\mu\nu} {\Omega^E}^{\mu\nu} \big) ,
\end{gather*}
where the traces are now taken over the fibre $E_x$. Here $s$ is the scalar curvature of the Levi-Civita connection $\nabla$, $\Delta$ is the scalar Laplacian and $\Omega^E$ is the curvature of the connection $\nabla^E$ corresponding to $\Delta^E$. %All $a_k(x,H)$ with odd $k$ vanish.
\end{thm}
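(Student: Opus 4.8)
The formula in \cref{thm:seeley-dewitt} is Gilkey's classical computation of the leading Seeley--DeWitt coefficients, so in this paper we merely quote it; for completeness, let me indicate the strategy one would use to prove it, which is Gilkey's invariance-theoretic method. First I would recall that the heat kernel $e^{-tH}(x,y)$ of a Laplace-type operator $H=\Delta^E-Q$ admits a small-$t$ diagonal expansion $e^{-tH}(x,x)\sim\sum_{k\ge0}t^{(k-m)/2}a_k(x,H)$ whose coefficients are \emph{local}: $a_k(x,H)$ is a universal polynomial in the jets at $x$ of the total symbol of $H$, i.e.\ in the metric $g$ and its derivatives, in the curvature $\Omega^E$ of $\nabla^E$ and its covariant derivatives, and in $Q$ and its covariant derivatives. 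This follows from the standard parametrix construction (a Volterra/Duhamel series for $e^{-tH}$, or Seeley's resolvent expansion of $(H-\lambda)^{-1}$). A rescaling argument, $H\mapsto c^{-2}H$ together with $g\mapsto c^{2}g$, then shows that $a_k(x,H)$ is homogeneous of weight $k$, where $g^{\mu\nu}$ carries weight $2$, each covariant derivative weight $1$, and both $\Omega^E$ and $Q$ weight $2$; in particular, by the parity of the short-time expansion, all odd coefficients $a_{2j+1}(x,H)$ vanish on a closed manifold.

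The second step is to enumerate, for each relevant weight, the admissible monomials. Since $a_k(x,H)$ is invariant under changes of local frame and of coordinates, H.~Weyl's theorem on invariants of the orthogonal group bounds this space: weight $0$ leaves only $\Tr(\Id)$; weight $2$ leaves only $\Tr(Q)$ and $s\,\Tr(\Id)$; and weight $4$ leaves precisely the eight fibrewise traces occurring in the theorem, namely $\Tr(\Delta s)$, $\Tr(s^2)$, $\Tr(R_{\mu\nu}R^{\mu\nu})$, $\Tr(R_{\mu\nu\rho\sigma}R^{\mu\nu\rho\sigma})$, $\Tr(sQ)$, $\Tr(Q^2)$, $\Tr(\Delta Q)$ and $\Tr(\Omega^E_{\mu\nu}{\Omega^E}^{\mu\nu})$ — there is no further invariant of that weight. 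Hence $a_k(x,H)$ is a \emph{universal} linear combination of these monomials, with constants independent of $m$, $M$, $E$ and $H$.

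What remains is to determine these universal constants by evaluating on well-chosen examples, and this is where the actual labour lies. For $a_0$ one computes $e^{-tH}(x,x)$ for $H=-\Delta$ on $\R^m$ by Fourier transform, giving $(4\pi)^{-m/2}$. For $a_2$, first-order perturbation in $Q$ about the flat Laplacian on a torus fixes the coefficient of $Q$, and the coefficient of $s$ follows from the round sphere $S^m$, whose heat trace is explicit, or from the conformal variation formula. For $a_4$ one combines several inputs at once: second-order perturbation in $Q$ on the flat torus fixes the coefficients of $\Tr(Q^2)$ and $\Tr(\Delta Q)$; a $Q$ linear in the curvature (or conformal rescaling) fixes $\Tr(sQ)$; a $U(1)$ connection of constant curvature on a flat torus, via the known heat trace of the magnetic Laplacian, fixes $\Tr(\Omega^E_{\mu\nu}{\Omega^E}^{\mu\nu})$; and the four purely Riemannian constants are obtained from products of spheres $S^{n_1}\times S^{n_2}$ of varying radii together with one global relation, e.g.\ the Gauss--Bonnet theorem in dimension four, or the fact that $\int_M a_m(x,H)\sqrt{|g|}\,d^mx$ is a topological invariant. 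The conceptual steps above are short; the genuine obstacle is the bookkeeping in this last step — carrying out the higher-order perturbation theory for $a_4$, tracking the many curvature contractions and the sign conventions (already aligned with Gilkey's here by the substitution $s=-R$), and verifying that the invariant-theoretic spanning set is exhaustive so that no monomial is missed. Since the result is classical, we simply refer to \cite[Thm.~4.8.16]{Gil84}.
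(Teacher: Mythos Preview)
Your proposal is entirely appropriate: the paper does not prove this theorem at all but explicitly states it ``without proof'' and cites \cite[Thm.~4.8.16]{Gil84}. Your sketch of Gilkey's invariance-theoretic method (locality of the heat coefficients, weight-counting of invariants, and pinning down the universal constants on test geometries) is an accurate summary of the standard argument and in fact goes well beyond what the paper itself offers.
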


Now, assume that the square of the fluctuated Dirac operator $D_A$ is of the form $\Delta^E - Q$ for some vector bundle $E$. Applying the heat expansion on ${D_A}^2$ then yields (as $t \to 0$): 
\begin{equation*}
\Tr\left(e^{-t{D_A}^2}\right) \sim \sum_{k\geq0} t^{\frac{k-m}2} a_k({D_A}^2) ,
\end{equation*}
where the Seeley-DeWitt coefficients are given by \cref{thm:seeley-dewitt}. Then, on writing $f$ as a Laplace transform, we obtain in the case of a $4$-dimensional manifold asymptotically (as $\Lambda\to \infty$):
%\begin{align*}
%f\left(\frac{D_A}{\Lambda}\right) = \int_0^\infty \til g(s) e^{-t{D_A}^2 / \Lambda^2} ds .
%\end{align*} 
%In the case of a $4$-dimensional manifold, 
%for the dominant terms of the heat expansion of the spectral action:
% are given by
\begin{align}
\label{eq:action_expansion}
\Tr \left(f\Big(\frac {D_A}\Lambda\Big)\right) \sim 2 f_4\Lambda^4 a_0({D_A}^2) + 2 f_2\Lambda^2 a_2({D_A}^2) + a_4({D_A}^2) f(0) + O(\Lambda^{-1}),
\end{align}
where $f_j = \int_0^\infty f(v) v^{j-1} dv$ are the momenta of the function $f$ for $j>0$.

\begin{example}
\label{ex:canon_spec_act}
For the canonical triple of a $4$-dimensional spin manifold $M$, we obtain (see \cite[Theorem 1.158]{connes-marcolli}%or \cite[Prop.~6.27]{mythesis}
)
\begin{align*}
\Tr \left(f\Big(\frac \sD\Lambda\Big)\right) \sim \frac1{4\pi^2} \int_M \L_M(g_{\mu\nu}) \sqrt{|g|} d^4x + O(\Lambda^{-1}) ,
\end{align*}
where the gravitational Lagrangian $\L_M$ is given by
\begin{align*}
\L_M(g_{\mu\nu}) := 2f_4\Lambda^4 - \frac16 f_2\Lambda^2 s + f(0) \Big(\frac1{120} \Delta s -\frac1{80} C_{\mu\nu\rho\sigma} C^{\mu\nu\rho\sigma} + \frac{11}{1440}R^*R^* \Big) .
\end{align*}
The first two terms yield the Einstein-Hilbert action including a cosmological constant. In addition, we obtain a higher-order contribution from the Weyl gravity term $C_{\mu\nu\rho\sigma} C^{\mu\nu\rho\sigma}$, as well as a boundary term $\Delta s$ and a topological contribution from $R^*R^*$. 
\end{example}

\section{The two-point space} 
\label{sec:U1}

In this section we will provide a simple example of an almost commutative manifold, based on the product of a spin manifold $M$ with a two-point space $X$. The spectral triple describing this example will have a commutative algebra. As mentioned in the introduction, it has been claimed \cite[Sect.~9.3]{landi} that the inner fluctuation $A + JAJ^*$ vanishes for commutative algebras. The proof is based on the claim that the left and right action can be identified, i.e.\ $a=a^0$, for a commutative algebra. Though this claim holds in the case of the canonical triple describing a spin manifold, it need not be true for arbitrary commutative algebras. The spectral triple given in this section provides a counter-example. 

What can be said for a commutative algebra, is that there exist no non-trivial inner automorphisms. It is thus an important insight here that the gauge group $\G(\A)$, as defined in \cref{defn:gauge_group_NCG}, is larger than the group of inner automorphisms, so that a commutative spectral triple may still lead to a non-trivial gauge group. In fact, we will show that our example given below describes an abelian $U(1)$ gauge theory. 

\subsection{A two-point space}
\label{sec:two-point}

We shall consider a finite spectral triple $F\Sub{X}$ corresponding to the two-point space $X = \{x,y\}$. A complex function on this space is simply determined by two complex numbers. The algebra of functions on $X$ is then given by $C(X) = \C^2$. Let us consider the \emph{even} finite spectral triple $F\Sub{X}$ given by
\begin{align*}
\left( C(X), \mH_F, D_F, \gamma_F \right) .
\end{align*}
We require that the representation $C(X)\rightarrow B(\mH_F)$ is faithful, which implies that the Hilbert space $\mH_F$ must be at least $2$-dimensional. Thus, the simplest possible choice is to take $\mH_F = \C^2$. We use the $\Z_2$-grading $\gamma_F$ to decompose $\mH_F = \mH_F^+ \oplus \mH_F^- = \C\oplus\C$ into the two eigenspaces $\mH_F^\pm = \{\psi\in\mH_F \mid \gamma_F\psi = \pm\psi \}$. Hence, we can write 
\begin{align*}
\gamma_F = \mattwo{1}{0}{0}{-1} .
\end{align*}
Because of the relations $[\gamma_F,a]=0$ and $D_F\gamma_F=-\gamma_FD_F$, the self-adjoint Dirac operator must be off-diagonal and the action of an element $a\in\A_F$ on $\psi\in\mH_F$ can be written as 
\begin{align}
\label{eq:rep_U1}
a\psi &= \mattwo{a_+}{0}{0}{a_-} \vectwo{\psi_+}{\psi_-} .
\end{align}
Thus, the even finite spectral triple $F\Sub{X}$ we will study in this section is given by
\begin{align}
\label{eq:triple_U1}
\left(\A_F, \mH_F, D_F, \gamma_F \right) = \left( \C^2, \C^2, \mattwo{0}{t}{\bar t}{0}, \mattwo{1}{0}{0}{-1} \right) ,
\end{align}
for some complex parameter $t\in\C$, and with the representation of $\A_F$ on $\mH_F$ given by \eqref{eq:rep_U1}.

\begin{prop}
\label{prop:zero_D_F}
The even finite spectral triple $F\Sub{X}$ given by \eqref{eq:triple_U1} can only have a real structure if $D_F = 0$. 
\end{prop}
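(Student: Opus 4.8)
The plan is to derive a contradiction with the order-one condition $\big[[D_F,a],b^0\big]=0$ that a real spectral triple must satisfy; the relations $J_F^2=\varepsilon$ and $J_FD_F=\varepsilon'D_FJ_F$ will turn out to be irrelevant. The first step is to pin down the right action $b^0=J_Fb^*J_F^{-1}$ for $b=\mattwo{b_+}{0}{0}{b_-}\in\A_F$. Since the representation \eqref{eq:rep_U1} exhausts all diagonal matrices and the order-zero relation \eqref{eq:order0} forces $b^0$ to commute with each of them, $b^0$ is itself diagonal; moreover $b\mapsto b^0$ is a unital $\ast$-homomorphism of the commutative algebra $\C^2$ into the diagonal matrices, so each of its two diagonal entries is one of the coordinate projections of $\C^2$. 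These entries cannot coincide: conjugation by the anti-unitary $J_F$ sends the rank-one projection $\mattwo{1}{0}{0}{0}$ to a rank-one projection, hence neither to $\I$ nor to $0$. Therefore either $b^0=\mattwo{b_+}{0}{0}{b_-}$ for all $b$, or $b^0=\mattwo{b_-}{0}{0}{b_+}$ for all $b$. (Equivalently, one may invoke \cref{lem:class_real}: the KO-dimension cannot be $4$ since $U(1)$ contains no anti-symmetric element, and computing $J_Fb^*J_F^{-1}$ from the remaining admissible forms of $J_F$ gives exactly these two possibilities.)

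The second step is a short computation with the differential one-form. From the off-diagonal shape of $D_F$ in \eqref{eq:triple_U1} one gets $[D_F,a]=(a_--a_+)\mattwo{0}{t}{-\bar t}{0}$, again off-diagonal, with a scalar prefactor ranging over all of $\C$ as $a$ varies. Bracketing this against the diagonal operator $b^0=\mattwo{\mu_+}{0}{0}{\mu_-}$ yields $\big[[D_F,a],b^0\big]=(a_--a_+)(\mu_--\mu_+)\,D_F$, and by the first step $\mu_--\mu_+=\pm(b_--b_+)$. Choosing $a,b\in\A_F$ with $a_+\neq a_-$ and $b_+\neq b_-$, the order-one condition $\big[[D_F,a],b^0\big]=0$ then forces $D_F=0$, which is the claim.

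I do not expect a serious obstacle: everything reduces to short $2\times2$ matrix manipulations. The only point that needs care is the case distinction for $J_F$ — equivalently, establishing that $b^0$ is diagonal and not a scalar multiple of $\I$ — after which the off-diagonal structure of both $D_F$ and $[D_F,a]$ makes the identity $\big[[D_F,a],b^0\big]\propto D_F$ essentially automatic, and the nonvanishing of the proportionality constant for suitable $a,b$ finishes the argument.
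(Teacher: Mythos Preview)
Your argument is correct, and the second step is essentially identical to the paper's computation: in every admissible case the paper also finds $\big[[D_F,a],b^0\big]=(a_+-a_-)(b_\pm-b_\mp)D_F$ and concludes $D_F=0$.

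The difference lies entirely in the first step. The paper invokes \cref{lem:class_real} and runs through the four even KO-dimensions one at a time, writing out the explicit matrix form of $J_F$ in each case, computing $b^0$ directly, and discarding KO-dimension $4$ because an anti-symmetric element of $U(1)$ does not exist. You instead bypass the classification: the order-zero condition forces $b^0$ to be diagonal, the map $b\mapsto b^0$ is a unital $\ast$-homomorphism of $\C^2$ into the diagonal $2\times2$ matrices, and the rank-preservation of conjugation by an anti-unitary rules out the two degenerate characters. This is cleaner and more conceptual --- it never needs to know what $J_F$ actually looks like, only that it is anti-unitary and that order zero holds --- at the modest cost of the rank argument, which is the only step requiring a moment's thought. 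The paper's case-by-case route, by contrast, is entirely mechanical and has the side benefit of explicitly exhibiting which KO-dimensions are possible (namely $0$, $2$, $6$), information that is used later in the paper when choosing the model.
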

\proof
We must have ${J_F}^2 = \varepsilon$ and $J_F\gamma_F = \varepsilon''\gamma_FJ_F$, and we shall consider all possible (even) KO-dimensions separately. Thus, we apply Lemma \ref{lem:class_real} to the finite spectral triple $F\Sub{X}$ given above and, for each even KO-dimension, also impose the relations $[a,b^0]=0$ and $\big[[D_F,a],b^0\big]=0$. This gives:
\begin{description}
\item[{\it KO-dimension $0$}] \mbox{}\\
We have $J_F = \mattwo{j_+}{0}{0}{j_-} C$ for $j_\pm\in U(1)$. For $b=\mattwo{b_+}{0}{0}{b_-}$ we then obtain 
$$
b^0 = \mattwo{j_+b_+\bar{j_+}}{0}{0}{j_-b_-\bar{j_-}} = b ,
$$
and see that this indeed commutes with the left action of $a\in\C^2$. Next, we check the order one condition
\begin{align*}
0 = \big[[D_F,a],b^0\big] &= %\left[\mattwo{0}{ta_--a_+t}{\bar t a_+-a_-\bar t}{0},\mattwo{b_+}{0}{0}{b_-}\right] =
 (a_+-a_-)(b_+-b_-)D_F .
\end{align*}
Since this must hold for all $a,b\in\C^2$, we conclude that we must require $D_F = 0$. 

\item[{\it KO-dimension $2$}] \mbox{}\\
We have $J_F = \mattwo{0}{j}{-j}{0} C$ for $j\in U(1)$. We now obtain 
$$
b^0 = \mattwo{jb_-\bar{j}}{0}{0}{jb_+\bar{j}} = \mattwo{b_-}{0}{0}{b_+} ,
$$
and see that this indeed commutes with the left action of $a\in\C^2$. Next, we check the order one condition
\begin{align*}
0 = \big[[D_F,a],b^0\big] &= %\left[\mattwo{0}{ta_--a_+t}{\bar t a_+-a_-\bar t}{0},\mattwo{b_-}{0}{0}{b_+}\right] = 
(a_+-a_-)(b_--b_+)D_F .
\end{align*}
Again we conclude that we must require $D_F = 0$. 

\item[{\it KO-dimension $4$}] \mbox{}\\
We have $J_F$ %= \mattwo{j_+}{0}{0}{j_-} C$ 
of the same form as in KO-dimension 0, but now with $j_\pm = -j_\pm^T\in U(1)$. This implies that $j_\pm = 0$, so the given finite spectral triple cannot have a real structure in KO-dimension $4$.

\item[{\it KO-dimension $6$}] \mbox{}\\
We have $J_F = \mattwo{0}{j}{j}{0} C$ for $j\in U(1)$. We again obtain
$$
b^0 = \mattwo{jb_-\bar{j}}{0}{0}{jb_+\bar{j}} = \mattwo{b_-}{0}{0}{b_+} ,
$$
just as for KO-dimension $2$. Hence again the commutation rules are only satisfied for $D_F = 0$. \qedhere
\end{description}
\endproof

\subsection{The almost commutative manifold}
\label{sec:product_space_U1}

Let us now consider the product of the finite spectral triple $F\Sub{X}$ of the two-point space, as described by \eqref{eq:triple_U1}, with the canonical triple describing a compact Riemannian spin manifold $M$, as in \cref{ex:canon}. From here on we will take $M$ to be $4$-dimensional. Thus we consider the almost commutative manifold $M\times F\Sub{X}$ given by the data
\begin{align*}
M\times F\Sub{X} := \Big( C^\infty(M,\C^2), L^2(M,S)\otimes\C^2, \sD\otimes\I, \gamma_5\otimes\gamma_F, J_M\otimes J_F \Big) ,
\end{align*}
where we still need to make a choice for $J_F$. The algebra of this almost commutative manifold is given by $C^\infty(M,\C^2) \simeq C^\infty(M)\oplus C^\infty(M) \simeq C^\infty(M \times X)$. Thus, the underlying space $N := M\times X \simeq M \sqcup M$ consists of the disjoint union of two identical copies of the space $M$, and we can write $C^\infty(N) = C^\infty(M)\oplus C^\infty(M)$. We can also decompose the total Hilbert space as $\mH = L^2(M,S) \oplus L^2(M,S)$. For $a,b\in C^\infty(M)$ and $\psi,\phi\in L^2(M,S)$, an element $(a,b)\in C^\infty(N)$ then simply acts on $(\psi,\phi)\in\mH$ as $(a,b) (\psi,\phi) = (a\psi,b\phi)$. 

\subsubsection{Distances}
To any spectral triple $(\A,\mH,D)$ one can associate a distance function on the space of states on $\A$: 
$$
d_D(p,q) = \sup \left\{ |p(a)- q(a) | \colon a\in\A, \|[D,a]\|\leq1 \right\} ,
$$
For the canonical triple, $\A = C^\infty(M)$ whose state space is homeomorphic to $M$. It turns out that $d_\sD$ is equal to the geodesic distance $d_g$ between points $p$ and $q$ on $M$.

We will use this formula as a generalized notion of distance, so on our finite spectral triple $F\Sub{X}$ we can write
\begin{align*}
d_{D_F}(x,y) = \sup \left\{ |a(x) - a(y)| \colon a\in\A_F, \|[D_F,a]\|\leq1 \right\} .
\end{align*}
Note that we now have only two distinct points $x$ and $y$ in the space $X$, and we shall calculate the distance between these points (cf.\ \cite[Sect.~6.8]{landi}). An element $a\in\C^2=C(X)$ is specified by two complex numbers $a(x)$ and $a(y)$ %given by $\mattwo{a(x)}{0}{0}{a(y)}$, so 
and its commutator with $D_F$ becomes
\begin{align*}
[D_F,a] =% \mattwo{0}{t}{\bar t}{0}\mattwo{a(x)}{0}{0}{a(y)} - \mattwo{a(x)}{0}{0}{a(y)}\mattwo{0}{t}{\bar t}{0} = 
\big(a(y)-a(x)\big) \mattwo{0}{t}{-\bar t}{0} .
\end{align*}
The norm of this commutator is $|a(y)-a(x)|\,|t|$, so $\|[D_F,a]\| \leq 1$ if and only if $|a(y)-a(x)| \leq \frac{1}{|t|}$. We thus obtain that the distance between the two points $x$ and $y$ is given by
\begin{align*}
d_{D_F}(x,y) = \frac{1}{|t|} .
\end{align*}
If there is a real structure $J_F$, we have $t=0$ by \cref{prop:zero_D_F}, so then the distance between the two points becomes infinite. 

Let $p$ be a point in $M$, and write $(p,x)$ or $(p,y)$ for the two corresponding points in $N=M\times X$. A function $a\in C^\infty(N)$ is then determined by two functions $a_x,a_y\in C^\infty(M)$, given by $a_x(p) := a(p,x)$ and $a_y(p) := a(p,y)$. Now consider the distance function on $N$ given by 
\begin{align*}
d_{\sD\otimes\I}(n_1,n_2) = \sup \left\{ |a(n_1) - a(n_2)| \colon a\in\A, \|[\sD\otimes\I,a]\|\leq1 \right\} .
\end{align*}
If $n_1$ and $n_2$ are points in the same copy of $M$, for instance if $n_1=(p,x)$ and $n_2=(q,x)$ for points $p,q\in M$, then their distance is determined by $|a_x(p) - a_x(q)|$, for functions $a_x\in C^\infty(M)$ for which $\|[\sD,a_x]\|\leq1$. Thus, in this case we obtain that we recover the geodesic distance on $M$, {\it i.e.}\ $d_{\sD\otimes\I}(n_1,n_2) = d_g(p,q)$. 

However, if $n_1$ and $n_2$ are points in a different copy of $M$, for instance if $n_1=(p,x)$ and $n_2=(q,y)$, then their distance is determined by $|a_x(p) - a_y(q)|$ for two functions $a_x,a_y\in C^\infty(M)$, such that $\|[\sD,a_x]\|\leq1$ and $\|[\sD,a_y]\|\leq1$. These latter requirements however yield no restriction on $|a_x(p) - a_y(q)|$, so in this case the distance between $n_1$ and $n_2$ is infinite. We thus find that the space $N$ is given by two disjoint copies of the Riemannian manifold $M$, which are separated by an infinite distance.

\subsection{\texorpdfstring{$U(1)$}{U(1)} gauge theory}
\label{sec:U1_gauge}

We would now like to derive the gauge theory that corresponds to the almost commutative manifold $M\times F\Sub{X}$. Recall that the gauge group $\G(\A)$ is given by the quotient $U(\A) / U(\til\A_J)$, so if we wish to obtain a nontrivial gauge group, we need to choose $J$ such that $U(\til\A_J) \neq U(\A)$. By looking at the form of $J_F$ for the different (even) KO-dimensions, as given in \cref{sec:two-point}, we conclude that $F\Sub{X}$ must have KO-dimension $2$ or $6$. In analogy with the noncommutative description of the Standard Model \cite{CCM07} we choose to work in KO-dimension 6. The almost commutative manifold $M\times F\Sub{X}$ then has KO-dimension $6+4\mod 8 = 2$. This means that we can use \cref{defn:fermion_act} to calculate the fermionic action. Therefore, we will consider the finite spectral triple $F\Sub{X}$ given by the data 
\begin{align*}
F\Sub{X} := \left(\A_F, \mH_F, D_F, \gamma_F, J_F \right) := \left( \C^2, \C^2, 0, \mattwo{1}{0}{0}{-1}, \mattwo{0}{C}{C}{0} \right) ,
\end{align*}
which define a real even finite spectral triple of KO-dimension $6$. Now, let us derive the gauge group.

\begin{prop}
\label{prop:gauge_group_U1}
The gauge group $\G(\A_F)$ of the two-point space is given by $U(1)$. %$\simeq \left\{\vectwo{\lambda}{\lambda^{-1}} \mid \lambda\in U(1)\right\}$
\end{prop}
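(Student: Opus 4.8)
The plan is to read off $\G(\A_F)$ directly from the short exact sequence of \cref{prop:gauge_group_NCG}, which for the finite triple $F\Sub{X}$ reads $1\to U(\til\A_{J_F})\to U(\A_F)\to\G(\A_F)\to 1$, so that $\G(\A_F)\cong U(\C^2)/U(\til\A_{J_F})$. The unitary group $U(\A_F)=U(\C^2)$ consists of the diagonal matrices $u=\mathrm{diag}(u_+,u_-)$ with $u_\pm\in U(1)$, i.e.\ $U(\A_F)\cong U(1)\times U(1)$, so the whole statement reduces to identifying $U(\til\A_{J_F})$ inside this group.

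The key step is to compute $\til\A_{J_F}=\{a\in\A_F\mid a=a^0\}$. The right action $b^0=J_Fb^*J_F^{-1}$ of a diagonal element $b=\mathrm{diag}(b_+,b_-)$ was already worked out in the proof of \cref{prop:zero_D_F} in KO-dimension $6$, where one finds $b^0=\mathrm{diag}(b_-,b_+)$ (the off-diagonal $J_F$ simply interchanges the two copies of $\C$). Hence $a=a^0$ if and only if $a_+=a_-$, so $\til\A_{J_F}$ is the diagonal subalgebra $\{(\lambda,\lambda)\mid\lambda\in\C\}\cong\C$, and $U(\til\A_{J_F})$ is the diagonally embedded $U(1)$ in $U(1)\times U(1)$. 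It then remains to note that $(U(1)\times U(1))/U(1)_{\mathrm{diag}}\cong U(1)$, for instance via the isomorphism induced by $(u_+,u_-)\mapsto u_+\overline{u_-}$; concretely, $\Ad(u)=u(u^*)^0=\mathrm{diag}(u_+\overline{u_-},\,u_-\overline{u_+})=\mathrm{diag}(\lambda,\overline\lambda)$ with $\lambda:=u_+\overline{u_-}\in U(1)$, and every such $\lambda$ is attained (take $u_-=1$), so $\G(\A_F)=\{\mathrm{diag}(\lambda,\overline\lambda)\mid\lambda\in U(1)\}\cong U(1)$.

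I do not expect a genuine obstacle: the statement reduces to this short matrix computation once \cref{prop:gauge_group_NCG} and the formula for $b^0$ in \cref{prop:zero_D_F} are in hand. The only point worth emphasising is that $\til\A_{J_F}$ is \emph{strictly} contained in $\A_F$ — it is the diagonal, not all of $\C^2$ — which is precisely why $\G(\A_F)$ is nontrivial even though $\A_F$ is commutative and hence $\Inn(\A_F)$ is trivial; this mismatch between $\G(\A_F)$ and $\Inn(\A_F)$ is exactly the phenomenon this section is meant to exhibit.
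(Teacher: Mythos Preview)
Your proof is correct and follows essentially the same route as the paper: identify $U(\A_F)\cong U(1)\times U(1)$, compute $\til\A_{J_F}$ as the diagonal copy of $\C$ (the paper redoes the matrix computation of $J_Fa^*J_F^*$, whereas you economically cite the $b^0$ calculation already done in \cref{prop:zero_D_F}), and read off the quotient. Your additional explicit description of $\Ad(u)$ and the isomorphism $(u_+,u_-)\mapsto u_+\overline{u_-}$ is a nice touch that the paper leaves implicit.
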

\begin{proof}
First, note that $U(\A_F) = U(1) \times U(1)$. We will show that $U( (\til\A_F)_{J_F} ) \equiv U(\A_F) \cap (\til\A_F)_{J_F} \simeq U(1)$ so that the quotient $\G(\A_F) \simeq U(1)$ as claimed. Indeed, for $a \in \C^2$ to be in $(\til \A_F)_{J_F}$ it has to satisfy $J_F a^* J_F =a $. Since 
$$
 J_Fa^*J_F^* = \mattwo{0}{C}{C}{0} \mattwo{\bar a_1}{0}{0}{\bar a_2} \mattwo{0}{C}{C}{0} = \mattwo{a_2}{0}{0}{a_1} ,
$$
this is the case if and only if $a_1 = a_2$. Thus, $(\til \A_F)_{J_F} \simeq \C$ whose unitary elements form the group $U(1)$, contained in $U(\A_F)$ as the diagonal subgroup.
%% First, for $u = \mattwo{u_1}{0}{0}{u_2} \in U(\C^2)$ we have $uu^* = u^*u = \I$, which implies $u_1\bar{u_1} = u_2\bar{u_2} = 1$. This means that $u_1,u_2\in U(1)$, and we have $U(\C^2) = U(1)\oplus U(1)$. Second, the subgroup $H_F = U(\A_F) \cap (\til\A_F)_{J_F}$ is determined by the condition that $J_Fu^*J_F^* = \mattwo{0}{C}{C}{0} \mattwo{\bar u_1}{0}{0}{\bar u_2} \mattwo{0}{C}{C}{0} = \mattwo{u_2}{0}{0}{u_1}$ is equal to $u = \mattwo{u_1}{0}{0}{u_2}$. This implies that $u_1=u_2$, so we obtain the subgroup
%% \begin{align*}
%% H_F = U(\A_F) \cap (\til\A_F)_{J_F} = \left\{ \vectwo{\lambda}{\lambda} \mid \lambda\in U(1) \right\} \simeq U(1) .
%% \end{align*}
%% By \cref{defn:gauge_group_NCG}, we then obtain $\G(\A_F) := U(\A_F) / H_F$. Let us construct the short exact sequence 
%% \begin{align*}
%% 1 \rightarrow H_F \rightarrow U(\A_F) \rightarrow U(1) \rightarrow 1 ,
%% \end{align*}
%% by defining the group homomorphism $\varphi\colon U(\A_F)\rightarrow U(1)$ as $\varphi(u) := u_1\bar{u_2}$. We immediately see that $\varphi(u) = 1$ if and only if $u_1=u_2$, so the kernel of $\varphi$ is indeed the subgroup $H_F$. We thus see that the gauge group $\G(\A_F)$ is isomorphic to $U(1)$. 
\end{proof}

We will now derive the gauge field for the almost commutative manifold $M\times F\Sub{X}$. Thus, we need to calculate the inner fluctuations of the Dirac operator. For $a,b\in C^\infty(M,\C^2)$, an inner fluctuation $A$ takes the form
\begin{align*}
A = a[D,b] = a[\sD\otimes\I,b] = i\gamma^\mu \otimes a\partial_\mu b =: \gamma^\mu\otimes A_\mu ,
\end{align*}
where we have defined the hermitian field $A_\mu \in C^\infty(M,\R^2)$. Using the relation $J_M\gamma^\mu = -\gamma^\mu J_M$, the total inner fluctuation is then given by
\begin{align}
\label{eq:inn_fluc}
A + JAJ^* = \gamma^\mu \otimes (A_\mu - J_FA_\mu J_F^*) =: \gamma^\mu \otimes B_\mu .
\end{align}
An arbitrary hermitian field of the form $A_\mu=ia\partial_\mu b$ would be given by $\mattwo{X^1_\mu}{0}{0}{X^2_\mu}$, for two $U(1)$ gauge fields $X^1_\mu,X^2_\mu\in C^\infty(M,\R)$. However, $A_\mu$ only appears in the combination 
\begin{align*}
B_\mu = A_\mu - J_FA_\mu J_F^{-1} = \mattwo{X^1_\mu}{0}{0}{X^2_\mu} - \mattwo{X^2_\mu}{0}{0}{X^1_\mu} =: \mattwo{Y_\mu}{0}{0}{-Y_\mu} = Y_\mu\otimes\gamma_F ,
\end{align*}
where we have defined the $U(1)$ gauge field $Y_\mu := X^1_\mu-X^2_\mu \in C^\infty(M,\R) = C^\infty(M,i\,\lu(1))$. Thus, the fact that we only have the combination $A+JAJ^*$ effectively identifies the $U(1)$ gauge fields on the two copies of $M$, so that $A_\mu$ is determined by only one $U(1)$ gauge field. We summarize:
\begin{prop}
\label{prop:gauge-field}
The inner fluctuations of the almost commutative manifold $M \times F_X$ described above are parametrized by a $U(1)$-gauge field $Y_\mu$ as
$$
D \mapsto D' = D + \gamma^\mu Y_\mu \otimes \gamma_F.
$$
The action of the gauge group $\G(\A) \simeq C^\infty(M, U(1))$ on $D'$ by conjugation is implemented by
$$
Y_\mu \mapsto Y_\mu - i u \partial_\mu u^*, \qquad (u \in \G(\A)). 
$$ 
\end{prop}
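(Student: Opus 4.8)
The plan is to assemble the explicit computations carried out immediately before the statement with the general description of gauge transformations from Section~2.3. For the first assertion I would begin with an arbitrary self-adjoint one-form $A\in\Omega^1_D(\A)$. Since $D=\sD\otimes\I$ and $[\sD,b]=-i\,c(\mathrm d b)$, every such $A$ has the shape $A=\gamma^\mu\otimes A_\mu$ displayed before the statement, with $A_\mu=A_\mu^{*}$ acting as a diagonal matrix $\mathrm{diag}(X^1_\mu,X^2_\mu)$ on the $\C^2$-factor for two real one-forms $X^1_\mu,X^2_\mu\in C^\infty(M,\R)$; conversely any such pair is realized, since $\Omega^1_\sD(C^\infty(M))$ exhausts the complex one-forms on $M$. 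Using $J=J_M\otimes J_F$ together with $J_M\gamma^\mu=-\gamma^\mu J_M$ (and $\varepsilon'=1$ in KO-dimension $2$), the total inner fluctuation $A+JAJ^{*}$ equals $\gamma^\mu\otimes(A_\mu-J_FA_\mu J_F^{*})$, and the $2\times2$ computation with $J_F=\mattwo{0}{C}{C}{0}$ shows $J_FA_\mu J_F^{*}=\mathrm{diag}(X^2_\mu,X^1_\mu)$. Hence $B_\mu:=A_\mu-J_FA_\mu J_F^{*}=Y_\mu\gamma_F$ for the single real field $Y_\mu:=X^1_\mu-X^2_\mu$, every such $Y_\mu$ occurs, and $D_A=D+\gamma^\mu Y_\mu\otimes\gamma_F=D'$, which is the asserted parametrization.

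For the transformation law I would invoke the fact recalled in Section~2.3 that conjugating a fluctuated Dirac operator by $U=uJuJ^{*}\in\G(\A)$ amounts to $A\mapsto A^u:=uAu^{*}+u[D,u^{*}]$, so that $D'\mapsto D_{A^u}$. Choosing a lift $u=(u_1,u_2)\in U(\A)=C^\infty(M,U(1)\times U(1))$, the element $u$ acts diagonally and unitarily on the $\C^2$-factor and commutes with the diagonal $A_\mu$, so $uAu^{*}=A$ and $A^u=A+u[\sD\otimes\I,u^{*}]$; since $[\sD,\bar u_i]=-i\gamma^\mu\partial_\mu\bar u_i$ this gives $X^i_\mu\mapsto X^i_\mu-iu_i\partial_\mu\bar u_i$, hence $Y_\mu\mapsto Y_\mu-i(u_1\partial_\mu\bar u_1-u_2\partial_\mu\bar u_2)$. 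Finally I would use the identification $\G(\A)\simeq C^\infty(M,U(1))$ via $(u_1,u_2)\mapsto u_1\bar u_2$, which follows from \cref{prop:gauge_group_U1} applied fibrewise since here $\til\A_J=C^\infty(M,\C)$ sits inside $\A$ as the diagonal subalgebra, together with the elementary identity $u_1\partial_\mu\bar u_1-u_2\partial_\mu\bar u_2=(u_1\bar u_2)\,\partial_\mu\overline{(u_1\bar u_2)}$ (which uses $u_i\partial_\mu\bar u_i=-\bar u_i\partial_\mu u_i$). Writing $u:=u_1\bar u_2\in\G(\A)$, the last display reads $Y_\mu\mapsto Y_\mu-iu\partial_\mu u^{*}$, as claimed.

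The one genuinely delicate point is this last step: one must verify that the induced transformation of $Y_\mu$ depends only on the class of $u$ in $\G(\A)=U(\A)/U(\til\A_J)$ and not on the chosen lift, i.e.\ that a diagonal $u_1=u_2$ leaves $Y_\mu$ fixed; this is exactly the content of the identity above. The remaining steps are routine bookkeeping, modulo keeping track of the sign conventions of \cref{ex:canon} (the $i$ in $[\sD,a]$ and $J_M\gamma^\mu=-\gamma^\mu J_M$), and the observation — used for the surjectivity of the parametrization — that $\Omega^1_\sD(C^\infty(M))$ is the full $C^\infty(M)$-module of complex one-forms on $M$.
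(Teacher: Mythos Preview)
Your proposal is correct and follows essentially the same approach as the paper. The paper in fact gives no separate proof of this proposition: it is introduced with ``We summarize'', and the first assertion is precisely the computation in the paragraphs preceding the statement, which you have reproduced accurately. For the second assertion the paper says nothing beyond the general formula $A\mapsto A^u$ of Section~2.3, whereas you have carried the computation through explicitly for this example, including the verification that the transformation of $Y_\mu$ depends only on the class $u_1\bar u_2\in\G(\A)$ via the identity $u_1\partial_\mu\bar u_1-u_2\partial_\mu\bar u_2=(u_1\bar u_2)\,\partial_\mu\overline{(u_1\bar u_2)}$; this is a genuine (if routine) addition to what the paper writes.
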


So far we have seen that the almost commutative manifold $M\times F\Sub{X}$ describes a gauge theory with local gauge group $U(1)$, where the inner fluctuations of the Dirac operator provide the $U(1)$ gauge field $Y_\mu$. The question arises whether this model is suitable for a description of (classical) electrodynamics. There appear to be two problems, even before considering the fermionic action $S_f$ explicitly. First, by \cref{prop:zero_D_F}, the finite Dirac operator $D_F$ must vanish. However, we want our fermions to be massive, and for this purpose we need a finite Dirac operator that is non-zero. 

Second, from \cite[Ch.7, \S5.2]{coleman}, we find the Euclidean action for a free Dirac field:
\begin{align}
\label{eq:coleman}
S = - \int i \bar\psi (\gamma^\mu\partial_\mu - m) \psi d^4x ,
\end{align}
where the fields $\psi$ and $\bar\psi$ must be considered as \emph{totally independent variables}. Thus, we require that the fermionic action $S_f$ should also yield two \emph{independent} Dirac spinors. Let us write $\left\{e, \bar e\right\}$ for the set of orthonormal basis vectors of $\mH_F$, where $e$ is the basis element of $\mH_F^+$ and $\bar e$ of $\mH_F^-$. Note that on this basis, we have $J_Fe=\bar e$, $J_F\bar e=e$, $\gamma_Fe=e$ and $\gamma_F\bar e=-\bar e$. The total Hilbert space $\mH$ is given by $L^2(M,S)\otimes\mH_F$. Since we can also decompose $L^2(M,S) = L^2(M,S)^+ \oplus L^2(M,S)^-$ by means of $\gamma_5$, we obtain that the positive eigenspace $\mH^+$ of $\gamma=\gamma_5\otimes\gamma_F$ is given by
\begin{align*}
\mH^+ = L^2(M,S)^+\otimes\mH_F^+ \oplus L^2(M,S)^-\otimes\mH_F^- .
\end{align*}
An arbitrary vector $\xi\in\mH^+$ can then uniquely be written as 
\begin{align*}
\xi =  \psi_L \otimes e + \psi_R \otimes \bar e ,
\end{align*}
for two Weyl spinors $\psi_L\in L^2(M,S)^+$ and $\psi_R\in L^2(M,S)^-$. One should note here that this vector $\xi$ is completely determined by only one Dirac spinor $\psi := \psi_L + \psi_R$, instead of the required two independent spinors. Thus, the restrictions that are incorporated into the fermionic action of \cref{defn:fermion_act} are such that the present example is in fact too restricted.

\section{Electrodynamics}
\label{sec:ED}

\subsection{The finite spectral triple}

Inspired by the previous section, which shows that one can use the framework of noncommutative geometry to describe a gauge theory with the abelian gauge group $U(1)$, we shall now attempt to describe the full theory of electrodynamics. There are two changes we need to make to the $U(1)$ gauge theory of the previous section. We need to incorporate a non-zero finite Dirac operator $D_F$ to obtain mass terms for the fermions, and we also need to obtain two independent Dirac spinors in the fermionic action. Both these changes can be simply obtained by doubling our finite Hilbert space.

We start with the same algebra $C^\infty(M,\C^2)$ that corresponds to the space $N = M\times X \simeq M\sqcup M$. The finite Hilbert space will now be used to describe four particles, namely both the left-handed and the right-handed electrons and positrons. We will choose the orthonormal basis $\left\{e_L, e_R, \bar{e_L}, \bar{e_R}\right\}$ for $\mH_F = \C^4$, with respect to the standard inner product. 
%Hence, for two vectors $h_F = a e_L + b e_R + c \bar{e_L} + d \bar{e_R}$ and $h_F' = a' e_L + b' e_R + c' \bar{e_L} + d' \bar{e_R}$ in $\mH_F$, we take the inner product on $\mH_F$ to be given by
%\begin{align*}
%\langle h_F, h_F'\rangle := \bar{a}a' + bb' + cc' + dd' .
%\end{align*}
%
The subscript $L$ denotes left-handed particles, and the subscript $R$ denotes right-handed particles, and we take $\gamma_F e_L = e_L$ and $\gamma_F e_R = -e_R$. 

We will choose $J_F$ such that it interchanges particles with their anti-particles, so $J_Fe_R = \bar{e_R}$ and $J_Fe_L = \bar{e_L}$. As in \cref{sec:U1_gauge}, we will choose the real structure such that is has KO-dimension $6$, so we have $J_F^2 = \I$ and $J_F\gamma_F = -\gamma_FJ_F$. This last relation implies that the element $\bar{e_R}$ is left-handed and $\bar{e_L}$ is right-handed. Hence, the grading $\gamma_F$ and the conjugation operator $J_F$ are given by
\begin{align*}
\gamma_F &= \matfour{1&0&0&0}{0&-1&0&0}{0&0&-1&0}{0&0&0&1} , & J_F &= \matfour{0&0&C&0}{0&0&0&C}{C&0&0&0}{0&C&0&0} .
\end{align*}

The grading $\gamma_F$ decomposes the Hilbert space $\mH_F$ into $\mH_L\oplus\mH_R$, where the bases of $\mH_L$ and $\mH_R$ are given by $\{e_L,\bar{e_R}\}$ and $\{e_R,\bar{e_L}\}$, respectively. We can also decompose the Hilbert space into $\mH_e\oplus\mH_{\bar e}$, where $\mH_e$ contains the electrons $\{e_L,e_R\}$, and $\mH_{\bar e}$ contains the positrons $\{\bar{e_L},\bar{e_R}\}$. 

The elements $a\in\A_F=\C^2$ are now represented on the basis $\left\{e_L, e_R, \bar{e_L}, \bar{e_R}\right\}$ as
\begin{align}
\label{eq:rep_ED}
a = \vectwo{a_1}{a_2} \mapsto \matfour{a_1&0&0&0}{0&a_1&0&0}{0&0&a_2&0}{0&0&0&a_2} .
\end{align}
Note that this representation commutes with the grading, as it should. We can also easily check that $[a,b^0] = 0$ for $b^0 := J_Fb^*J_F^*$, since both the left and the right action are given by diagonal matrices. For now, we will still take $D_F = 0$, and hence the order one condition is trivially satisfied. We have now obtained the following result:
\begin{prop}
The real even finite spectral triple 
\begin{align*}
F\Sub{ED} := (\C^2, \C^4, 0, \gamma_F, J_F) 
\end{align*}
as given above defines a real even finite spectral triple of KO-dimension $6$. 
\end{prop}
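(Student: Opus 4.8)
The plan is to verify directly that the data $F\Sub{ED} = (\C^2, \C^4, 0, \gamma_F, J_F)$ satisfies all the axioms of a real even finite spectral triple of KO-dimension $6$, using the explicit matrices provided. Since the Dirac operator is zero, several axioms become trivial, so the task reduces to checking a short list of algebraic identities. First I would record what must be verified: (i) the representation \eqref{eq:rep_ED} of $\C^2$ on $\C^4$ is a faithful unital $\ast$-representation; (ii) $\gamma_F$ is a $\Z_2$-grading, i.e. $\gamma_F^2 = \I$, $\gamma_F^* = \gamma_F$, and $[\gamma_F, a] = 0$ for all $a \in \A_F$; (iii) $J_F$ is anti-unitary; (iv) the KO-dimension $6$ sign relations $J_F^2 = \I$ and $J_F\gamma_F = -\gamma_F J_F$ hold; (v) the order zero condition $[a, b^0] = 0$; and (vi) the order one condition $\big[[D_F, a], b^0\big] = 0$.

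Most of these I would dispatch by inspection of the given matrices. For (ii), $\gamma_F = \mathrm{diag}(1,-1,-1,1)$ is manifestly self-adjoint with square $\I$, and since the representation \eqref{eq:rep_ED} is diagonal it commutes with $\gamma_F$. For (iii) and (iv), I would write $J_F = P\,C$ where $P$ is the permutation-type unitary swapping the first pair of basis vectors with the second pair; then anti-unitarity is clear, $J_F^2 = P\bar P = P^2 = \I$ because $P$ is a real involution, and $J_F\gamma_F = -\gamma_F J_F$ follows because conjugating $\gamma_F$ by $P$ swaps the $(1,1)$ entry with the $(4,4)$ entry and the $(2,2)$ with the $(3,3)$, turning $\mathrm{diag}(1,-1,-1,1)$ into $\mathrm{diag}(1,-1,-1,1)$ — wait, that is itself, so one must be slightly careful: the sign flip comes from the ordering of the basis $\{e_L, e_R, \bar{e_L}, \bar{e_R}\}$ versus the grading pattern, and I would verify the relation by a direct $4\times4$ matrix multiplication rather than a hand-waving argument. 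For (v), since $b^0 = J_F b^* J_F^*$ with $b$ diagonal and $J_F = PC$, one computes $b^0 = P\,\overline{b^*}\,P = P\,b^T\,P$, which is again diagonal (with the entries $a_1, a_2$ permuted into the blocks occupied by the anti-particles), hence commutes with the diagonal matrix $a$. For (vi), since $D_F = 0$ the commutator $[D_F, a]$ vanishes identically, so the order one condition holds trivially, exactly as the excerpt already notes.

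The only genuinely delicate point — and the one I would single out as the main obstacle — is confirming the sign in $J_F\gamma_F = -\gamma_F J_F$, which is what pins down the KO-dimension as $6$ rather than $0$. This is not hard, but it is the step where an error in conventions (basis ordering, or the placement of the minus signs in $\gamma_F$) would silently change the claimed result, so I would carry out the $4\times4$ multiplication explicitly. Once that is in hand, I would also remark — as a consistency check rather than a logical necessity — that the $2\times2$ finite triple $F\Sub{X}$ of \cref{sec:U1_gauge} sits inside $F\Sub{ED}$ as the restriction to $\{e_L, \bar{e_L}\}$ (or equivalently that $F\Sub{ED}$ is built by "doubling", matching the discussion preceding the proposition), so the KO-dimension $6$ assignment is inherited in the expected way. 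In summary, the proof is a short verification: assemble the matrices for $\gamma_F$, $J_F$, $a$, and $b^0$; check the three grading relations and the two sign relations by direct computation; observe that the order zero condition follows from diagonality and the order one condition is vacuous because $D_F = 0$.
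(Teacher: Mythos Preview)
Your proposal is correct and follows essentially the same approach as the paper: the paper does not provide a separate proof for this proposition but instead states it as a summary of the verifications carried out in the preceding paragraphs (the grading commutes with the diagonal representation, both left and right actions are diagonal so $[a,b^0]=0$, and the order one condition is trivial since $D_F=0$), together with the explicit choice of $J_F$ made to satisfy $J_F^2=\I$ and $J_F\gamma_F=-\gamma_F J_F$. One small correction: in your hand-waving step you describe $P$ as swapping the $(1,1)$ entry with $(4,4)$ and $(2,2)$ with $(3,3)$, but in the given basis $J_F$ actually swaps positions $1\leftrightarrow3$ and $2\leftrightarrow4$; conjugating $\mathrm{diag}(1,-1,-1,1)$ by this $P$ yields $\mathrm{diag}(-1,1,1,-1)=-\gamma_F$, which is exactly the sign you need --- so the direct $4\times4$ computation you plan to do will confirm the anticommutation without difficulty.
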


\subsubsection{A non-trivial finite Dirac operator}

Let us now consider the possibilities for adding a non-zero Dirac operator to the finite spectral triple $F\Sub{ED}$. Since $D_F\gamma_F = -\gamma_FD_F$, the Dirac operator obtains the form 
\begin{align*}
D_F = \matfour{0&d_1&d_2&0}{\bar d_1&0&0&d_3}{\bar d_2&0&0&d_4}{0&\bar d_3&\bar d_4&0} .
\end{align*}
Next, we impose the commutation relation $D_FJ_F = J_FD_F$, which implies $d_1 = \bar d_4$.
%% \begin{align*}
%% \matfour{d_2C&0&0&d_1C}{0&d_3C&\bar d_1C&0}{0&d_4C&\bar d_2C&0}{\bar d_4C&0&0&\bar d_3C} = \matfour{C\bar d_2&0&0&Cd_4}{0&C\bar d_3&C\bar d_4&0}{0&Cd_1&Cd_2&0}{C\bar d_1&0&0&Cd_3} .
%% \end{align*}
%% This imposes the relation $d_1 = \bar d_4$. 
For the order one condition, we calculate
\begin{align*}
[D_F,a] &= %\matfour{0&d_1a_1-a_1d_1&d_2a_2-a_1d_2&0}{\bar d_1a_1-a_1\bar d_1&0&0&d_3a_2-a_1d_3}{\bar d_2a_1-a_2\bar d_2&0&0&\bar d_1a_2-a_2\bar d_1}{0&\bar d_3a_1-a_2\bar d_3&d_1a_2-a_2d_1&0} = 
(a_1-a_2) \matfour{0&0&-d_2&0}{0&0&0&-d_3}{\bar d_2&0&0&0}{0&\bar d_3&0&0} .
\end{align*}
which then imposes the condition
\begin{align*}
0 = \big[[D_F,a],b^0\big] %= (a_1-a_2) \left[ \matfour{0&0&-d_2&0}{0&0&0&-d_3}{\bar d_2&0&0&0}{0&\bar d_3&0&0} , \matfour{b_2&0&0&0}{0&b_2&0&0}{0&0&b_1&0}{0&0&0&b_1} \right] \\
&= (a_1-a_2)(b_2-b_1) \matfour{0&0&d_2&0}{0&0&0&d_3}{\bar d_2&0&0&0}{0&\bar d_3&0&0} . 
\end{align*}
Since this must hold for all $a,b\in\C^2$, we must require that $d_2=d_3=0$. To conclude, the Dirac operator only depends on one complex parameter and is given by
\begin{align}
\label{eq:Dirac}
D_F = \matfour{0&d&0&0}{\bar d&0&0&0}{0&0&0&\bar d}{0&0&d&0} .
\end{align}
From here on, we will consider the finite spectral triple $F\Sub{ED}$ given by
\begin{align*}
F\Sub{ED} := (\C^2, \C^4, D_F, \gamma_F, J_F) .
\end{align*}

\subsection{The almost commutative manifold}

By taking the product with the canonical triple, our almost commutative manifold (of KO-dimension $2$) under consideration is given by
\begin{align*}
M\times F\Sub{ED} := \left( C^\infty(M,\C^2), L^2(M,S)\otimes\C^4, \sD\otimes\I + \gamma_5\otimes D_F, \gamma_5\otimes\gamma_F, J_M\otimes J_F \right) .
\end{align*}
As in \cref{sec:U1}, the algebra decomposes as $C^\infty(M,\C^2) = C^\infty(M)\oplus C^\infty(M)$, and we now decompose the Hilbert space as $\mH = (L^2(M,S)\otimes\mH_e)\oplus(L^2(M,S)\otimes\mH_{\bar e})$. The action of the algebra on $\mH$, given by \eqref{eq:rep_ED}, is then such that one component of the algebra acts on the electron fields $L^2(M,S)\otimes\mH_e$, and the other component acts on the positron fields $L^2(M,S)\otimes\mH_{\bar e}$. 

The derivation of the gauge group for $F\Sub{ED}$ is exactly the same as in \cref{prop:gauge_group_U1}, so again we have the finite gauge group $\G(\A_F) \simeq U(1)$. The field $B_\mu := A_\mu - J_FA_\mu J_F^*$ now takes the form
\begin{align}
\label{eq:gauge_field}
B_\mu = \matfour{Y_\mu&0&0&0}{0&Y_\mu&0&0}{0&0&-Y_\mu&0}{0&0&0&-Y_\mu} \qquad\text{for } Y_\mu(x) \in \R .
\end{align}
Thus, we again obtain a single $U(1)$ gauge field $Y_\mu$, carrying an action of the gauge group $\G(\A) \simeq C^\infty(M, U(1))$ (as in Proposition \ref{prop:gauge-field}). 

As mentioned before, our space $N$ consists of two copies of $M$, and the distance between these two copies is infinite (cf.\ \cref{sec:product_space_U1}). Now, we have introduced a non-zero Dirac operator, but it commutes with the algebra, i.e.\ $[D_F,a]=0$ for all $a\in\A$. Therefore, the distance between the two copies of $M$ is still infinite. 

To summarize, the $U(1)$ gauge theory arises from the geometric space $N=M \sqcup M$ as follows. On one copy of $M$, we have the vector bundle $S\otimes(M\times\mH_e)$, and on the other copy the vector bundle $S\otimes(M\times\mH_{\bar e})$. The gauge fields on each copy of $M$ are effectively identified to each other. The electrons $e$ and positrons $\bar e$ are then both coupled to the same gauge field, and as such the gauge field provides an interaction between electrons and positrons.

\subsection{The Lagrangian}

We are now ready to explicitly calculate the Lagrangian that corresponds to the almost commutative manifold $M\times F\Sub{ED}$, and we will show that this yields the usual Lagrangian for electrodynamics (on a curved background manifold), as well as a purely gravitational Lagrangian. The action functional for an almost commutative manifold, as defined in \cref{defn:fermion_act}, consists of the spectral action $S_b$ and the fermionic action $S_f$, which we will calculate separately. 

\subsubsection{The spectral action}

Before we can calculate the spectral action, we first need to study the fluctuated Dirac operator in a little more detail. As in \eqref{eq:inn_fluc}, we have $A+JAJ^* = \gamma^\mu\otimes B_\mu$, where now $B_\mu$ is given by \eqref{eq:gauge_field}. This allows us to rewrite the fluctuated Dirac operator in the form
\begin{align*}
D_A = \sD\otimes\I + \gamma^\mu\otimes B_\mu + \gamma_5\otimes D_F = -i\gamma^\mu\otimes\nabla^E_\mu + \gamma_5\otimes D_F ,
\end{align*}
where we have defined a new connection $\nabla^E_\mu$ by
\begin{align}
\label{eq:conn_E}
\nabla^E_\mu = \nabla^S_\mu\otimes\I + i \I\otimes B_\mu .
\end{align}
For the square of the fluctuated Dirac operator, we obtain by direct calculation that 
\begin{align*}
{D_A}^2 = \Delta^E - Q ,
\end{align*}
where $\Delta^E$ is the Laplacian of the connection $\nabla^E$, and where $Q\in\Gamma(\End(M\times\mH_F))$ is given by
\begin{align*}
Q= -\frac14 s\otimes\I - \I\otimes {D_F}^2 + \frac12 i \gamma^\mu\gamma^\nu\otimes F_{\mu\nu} .
\end{align*}
Here we have defined the curvature $F_{\mu\nu}$ of the field $B_\mu$ as $F_{\mu\nu} := \partial_\mu B_\nu - \partial_\nu B_\mu$.

\begin{prop}
\label{prop:spec_act_ED}
The spectral action of the almost commutative manifold
\begin{align*}
M\times F\Sub{ED} = \left( C^\infty(M,\C^2), L^2(M,S)\otimes\C^4, \sD\otimes\I + \gamma_5\otimes D_F, \gamma_5\otimes\gamma_F, J_M\otimes J_F \right) 
\end{align*}
is given by
\begin{align*}
\Tr \left(f\Big(\frac {D_A}\Lambda\Big)\right) &\sim \frac1{4\pi^2} \int_M \L(g_{\mu\nu}, Y_\mu) \sqrt{|g|} d^4x + O(\Lambda^{-1}) ,
\end{align*}
for
\begin{align*}
\L(g_{\mu\nu}, Y_\mu) := 4\L_M(g_{\mu\nu}) + \L_Y(Y_\mu) + \L_H(g_{\mu\nu},d) .
\end{align*}
Here $\L_M(g_{\mu\nu})$ is defined in \cref{ex:canon_spec_act}. $\L_Y$ gives the kinetic term of the $U(1)$-gauge field $Y_\mu$ and equals 
\begin{align*}
\L_Y(Y_\mu) := \frac23f(0) \F_{\mu\nu}\F^{\mu\nu} ,
\end{align*}
where we have defined the curvature $\F_{\mu\nu}$ of the field $Y_\mu$ as $\F_{\mu\nu} := \partial_\mu Y_\nu - \partial_\nu Y_\mu$. 
The Higgs potential $\L_H$ (ignoring the boundary term) only gives two constant terms which add to the cosmological constant, plus an extra contribution to the Einstein-Hilbert action:
\begin{align*}
\L_H(g_{\mu\nu}) := -8f_2\Lambda^2 |d|^2 + 2 f(0) |d|^4 + \frac{1}{3} f(0) s|d|^2 .
\end{align*}
\end{prop}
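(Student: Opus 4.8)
The plan is to compute $\Tr\!\big(f(D_A/\Lambda)\big)$ through the heat expansion \eqref{eq:action_expansion}, using the decomposition $D_A^2 = \Delta^E - Q$ recorded just above the statement together with the Seeley--DeWitt coefficients of \cref{thm:seeley-dewitt}. Since $M$ is four-dimensional, only $a_0$, $a_2$ and $a_4$ contribute to the displayed order in $\Lambda$, and the traces in those coefficients are taken over the fibre $S_x\otimes\C^4$, of dimension $16$. Before the bookkeeping I would record three elementary facts about the finite data: from \eqref{eq:Dirac} one has $D_F^2 = |d|^2\,\I$ on $\C^4$; from \eqref{eq:gauge_field} one has $F_{\mu\nu} = \F_{\mu\nu}\,\gamma_F$, so $F_{\mu\nu}F^{\mu\nu} = \F_{\mu\nu}\F^{\mu\nu}\,\I$ on $\C^4$; and, with $\nabla^E$ as in \eqref{eq:conn_E}, its curvature is $\Omega^E_{\mu\nu} = \Omega^S_{\mu\nu}\otimes\I + i\,\I\otimes F_{\mu\nu}$, where $\Omega^S$ is the Riemannian (spin) curvature.

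Next I would evaluate the three coefficients in turn. For $a_0$ one gets $(4\pi)^{-2}\Tr(\Id) = 16\,(4\pi)^{-2}$, which accounts for the $4\cdot 2f_4\Lambda^4$ contained in $4\L_M$ once the normalization of \cref{ex:canon_spec_act} is fixed. For $a_2 = (4\pi)^{-2}\Tr(\tfrac s6 + Q)$, the term $\tfrac12 i\gamma^\mu\gamma^\nu\otimes F_{\mu\nu}$ of $Q$ is traceless on $S_x$, so $\Tr(\tfrac s6 + Q) = 16(\tfrac s6 - \tfrac s4) - 16|d|^2 = -\tfrac43 s - 16|d|^2$; the scalar-curvature piece feeds the $-\tfrac16 f_2\Lambda^2 s$ term of $4\L_M$, while the $|d|^2$ piece gives the $-8f_2\Lambda^2|d|^2$ term of $\L_H$. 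The coefficient $a_4$ is the substantial one, and I would split the trace of \cref{thm:seeley-dewitt} into three groups: (i) the purely metric terms $-12\Delta s + 5s^2 - 2R_{\mu\nu}R^{\mu\nu} + 2R_{\mu\nu\rho\sigma}R^{\mu\nu\rho\sigma}$ (multiplied by $\Tr(\Id) = 16$), the contributions of $60\,sQ$, $180\,Q^2$ and $-60\,\Delta Q$ coming from the $-\tfrac14 s\otimes\I$ part of $Q$, and the $\Omega^S\otimes\I$ part of $30\,\Omega^E_{\mu\nu}\Omega^{E\,\mu\nu}$; using the standard expansion of $\Tr_S(\Omega^S_{\mu\nu}\Omega^{S\,\mu\nu})$ in the Riemann tensor exactly as in \cref{ex:canon_spec_act}, this group reassembles into $4$ times the curvature part of $\L_M$; (ii) the $\F_{\mu\nu}\F^{\mu\nu}$ terms, arising only from $180\,Q^2$ (via the Clifford trace $\Tr_S(\gamma^\mu\gamma^\nu\gamma^\rho\gamma^\sigma)$) and from the $i\,\I\otimes F$ part of $30\,\Omega^E_{\mu\nu}\Omega^{E\,\mu\nu}$, the mixed $\Omega^S\!\otimes\!F$ term vanishing since $\Omega^S_{\mu\nu}$ is traceless on $S_x$; collecting these gives precisely $\tfrac23 f(0)\F_{\mu\nu}\F^{\mu\nu}$, i.e.\ $\L_Y$; and (iii) the $|d|$-dependent terms, namely $|d|^4$ from $180\,Q^2$, $s|d|^2$ from $60\,sQ$ and $180\,Q^2$ together, and the boundary term proportional to $\Delta|d|^2$, which vanishes because $d$ is constant; these assemble into the remaining pieces of $\L_H$.

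The main obstacle is precisely this evaluation of $a_4$: one must get right the spinor traces $\Tr_S(\gamma^\mu\gamma^\nu) = 0$, $\Tr_S(\gamma^\mu\gamma^\nu\gamma^\rho\gamma^\sigma)$ in terms of the metric, $\Tr_S(\Omega^S_{\mu\nu}) = 0$, and $\Tr_S(\Omega^S_{\mu\nu}\Omega^{S\,\mu\nu})$ in terms of $R_{\mu\nu\rho\sigma}$, and then keep consistent track of how each factor of the tensor product $S_x\otimes\C^4$ contributes its dimension, so that all the purely gravitational terms collapse exactly into $4\L_M$ with nothing left over. Once these trace identities are in hand, everything else reduces to linear algebra with the explicit $4\times4$ matrices $D_F$, $\gamma_F$, $F_{\mu\nu}$, and to the Laplace-transform step already packaged in \eqref{eq:action_expansion}; a final division by $(4\pi)^2$ versus the $\tfrac1{4\pi^2}$ prefactor absorbs the remaining factors of $4$.
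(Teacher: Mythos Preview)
Your plan is correct and matches the paper's proof, which likewise feeds $D_A^2 = \Delta^E - Q$ into \eqref{eq:action_expansion} and \cref{thm:seeley-dewitt}; the only cosmetic difference is that the paper factors out $4\,a_k(x,\sD^2)$ at each stage (so the gravitational piece is packaged from the outset via \cref{ex:canon_spec_act}) rather than reassembling $4\L_M$ at the end as you do. One slip to fix in your write-up: $\Tr_S(\gamma^\mu\gamma^\nu) = 4g^{\mu\nu}$, not $0$ --- what vanishes is its contraction with the antisymmetric $F_{\mu\nu}$, exactly as you used correctly in your $a_2$ computation.
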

\begin{proof}
Since ${D_A}^2$ is of the form $\Delta^E - Q$, we obtain the heat expansion of the spectral action from \eqref{eq:action_expansion}. Thus, we only need to calculate the Seeley-DeWitt coefficients from \cref{thm:seeley-dewitt}. The trace over the Hilbert space $\mH_F$ yields an overall factor $4 = \dim \mH_F$, so we obtain
\begin{align*}
a_0(x,{D_A}^2) = 4 a_0(x,\sD^2) .
\end{align*}
For the second coefficient we have
\begin{align*}
a_2(x,{D_A}^2) = 4 a_2(x,\sD^2) + \frac{1}{16\pi^2} \Tr\Big(-\I\otimes {D_F}^2 + \frac12i\gamma^\mu\gamma^\nu\otimes F_{\mu\nu}\Big) .
\end{align*}
Since $\Tr(\gamma^\mu\gamma^\nu) = 4g^{\mu\nu}$ and $F_{\mu\nu}$ is anti-symmetric, the trace over the last term vanishes. From \eqref{eq:Dirac} we easily see that ${D_F}^2 = |d|^2\I$, so we obtain
\begin{align*}
a_2(x,{D_A}^2) = 4 a_2(x,\sD^2) - \frac{|d|^2}{\pi^2} .
\end{align*}
For the last coefficient, we need the curvature $\Omega^E_{\mu\nu}$ of the connection $\nabla^E$ of \eqref{eq:conn_E}. Its square is given by
\begin{align*}
\Omega^E_{\mu\nu}{\Omega^E}^{\mu\nu} = \Omega^S_{\mu\nu}{\Omega^S}^{\mu\nu}\otimes\I - \I\otimes F_{\mu\nu}F^{\mu\nu} + 2i \Omega^S_{\mu\nu}\otimes F^{\mu\nu} ,
\end{align*}
where the last term is traceless. 
We also obtain a contribution from $Q^2$, which is given by
\begin{align*}
Q^2 = \frac{1}{16} s^2\otimes\I + \I\otimes {D_F}^4 - \frac14 \gamma^\mu\gamma^\nu\gamma^\rho\gamma^\sigma\otimes F_{\mu\nu}F_{\rho\sigma} + \frac12 s\otimes{D_F}^2 + \text{ traceless terms} .
\end{align*}
We shall ignore the boundary term $\Delta Q$. The last Seeley-DeWitt coefficient is then given by
\begin{multline*}
a_4(x,{D_A}^2) = 4 a_4(x,\sD^2) + \frac{1}{16\pi^2} \frac{1}{360} \Tr\Big(- 60 s\otimes {D_F}^2 + 180 \big(\I\otimes {D_F}^4 \\
- \frac14 \gamma^\mu\gamma^\nu\gamma^\rho\gamma^\sigma\otimes F_{\mu\nu}F_{\rho\sigma} + \frac12 s\otimes{D_F}^2\big) - 30 \I\otimes F_{\mu\nu}F^{\mu\nu} \Big) .
\end{multline*}
Using the trace identity
\begin{align*}
\Tr\big(\frac14 \gamma^\mu\gamma^\nu\gamma^\rho\gamma^\sigma\big) = g^{\mu\nu}g^{\rho\sigma} - g^{\mu\rho}g^{\nu\sigma} + g^{\mu\sigma}g^{\nu\rho} ,
\end{align*}
along with the anti-symmetry of $F_{\mu\nu}$, we calculate that 
\begin{align*}
\Tr\big(- \frac14 \gamma^\mu\gamma^\nu\gamma^\rho\gamma^\sigma\otimes F_{\mu\nu}F_{\rho\sigma}\big) = 2 \Tr(F_{\mu\nu}F^{\mu\nu}) = 8 \F_{\mu\nu}\F^{\mu\nu}.
\end{align*}
We thus obtain the final coefficient as
\begin{align*}
a_4(x,{D_A}^2) = 4 a_4(x,\sD^2) + \frac{1}{12\pi^2} s|d|^2 + \frac{1}{2\pi^2} |d|^4 + \frac{1}{6\pi^2} \F_{\mu\nu}\F^{\mu\nu} .
\end{align*}
The result now follows from inserting these Seeley-DeWitt coefficients into the asymptotic expansion \eqref{eq:action_expansion}, where we realize that the coefficients $a_k(\sD^2)$ yield the gravitational Lagrangian $\L_M$ of \cref{ex:canon_spec_act}. 
\end{proof}

\subsubsection{The fermionic action}

We have written the set of basis vectors of $\mH_F$ as $\left\{e_L, e_R, \bar{e_L}, \bar{e_R}\right\}$, and the subspaces $\mH_F^+$ and $\mH_F^-$ are spanned by $\left\{e_L, \bar{e_R}\right\}$ and $\left\{e_R, \bar{e_L}\right\}$, respectively. The total Hilbert space $\mH$ is given by $L^2(M,S)\otimes\mH_F$. Since we can also decompose $L^2(M,S) = L^2(M,S)^+ \oplus L^2(M,S)^-$ by means of $\gamma_5$, we obtain
\begin{align*}
\mH^+ = L^2(M,S)^+\otimes\mH_F^+ \oplus L^2(M,S)^-\otimes\mH_F^- .
\end{align*}
A spinor $\psi\in L^2(M,S)$ can be decomposed as $\psi = \psi_L + \psi_R$. Each subspace $\mH_F^\pm$ is now spanned by two basis vectors. A generic element of the tensor product of two spaces consists of sums of tensor products, so an arbitrary vector $\xi\in\mH^+$ can uniquely be written as 
\begin{align}
\label{eq:right_fermion_ED}
\xi = \chi_L\otimes e_L + \chi_R\otimes e_R + \psi_R\otimes\bar{e_L} + \psi_L\otimes\bar{e_R} ,
\end{align}
for Weyl spinors $\chi_L,\psi_L\in L^2(M,S)^+$ and $\chi_R,\psi_R\in L^2(M,S)^-$. Note that this vector $\xi\in\mH^+$ is now completely determined by two Dirac spinors $\chi := \chi_L + \chi_R$ and $\psi := \psi_L + \psi_R$. 

\begin{prop}
\label{prop:fermion_act_ED}
The fermionic action of the almost commutative manifold
\begin{align*}
M\times F\Sub{ED} = \left( C^\infty(M,\C^2), L^2(M,S)\otimes\C^4, \sD\otimes\I + \gamma_5\otimes D_F, \gamma_5\otimes\gamma_F, J_M\otimes J_F \right) 
\end{align*}
is given by
\begin{align*}
S_f = -i\big\langle J_M\til\chi,\gamma^\mu(\nabla^S_\mu - i Y_\mu)\til\psi\big\rangle + \langle J_M\til\chi_L,\bar d\til\psi_L\rangle - \langle J_M\til\chi_R,d\til\psi_R\rangle .
\end{align*}
\end{prop}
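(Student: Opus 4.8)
The plan is to compute the fermionic action $S_f[A,\xi] = \tfrac12\langle J\tilde\xi, D_A\tilde\xi\rangle$ directly from \cref{defn:fermion_act}, using the explicit decomposition \eqref{eq:right_fermion_ED} of $\xi\in\mathcal H^+$, the known form of $D_A$ from the previous subsection, and the real structure $J = J_M\otimes J_F$. First I would substitute $D_A = -i\gamma^\mu\otimes\nabla^E_\mu + \gamma_5\otimes D_F$ (with $\nabla^E_\mu = \nabla^S_\mu\otimes\I + i\I\otimes B_\mu$ and $B_\mu$ given by \eqref{eq:gauge_field}) and expand $\langle J\tilde\xi, D_A\tilde\xi\rangle$ into the sum over the four basis vectors $e_L, e_R, \bar{e_L}, \bar{e_R}$ of $\mathcal H_F$. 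The action of $J_F$ on this basis is $J_F e_L = \bar{e_L}$, $J_F e_R = \bar{e_R}$, and conversely, so $J\tilde\xi$ is a vector whose $\mathcal H_F$-components are the $J_M$-conjugates of the original spinor components, permuted by $J_F$. One then reads off which pairings of basis vectors survive the inner product on $\mathcal H_F$.

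The computation splits naturally into two pieces according to the two summands of $D_A$. For the $\gamma^\mu\otimes\nabla^E_\mu$ piece: since $B_\mu$ acts as $+Y_\mu$ on $\mathcal H_e = \mathrm{span}\{e_L,e_R\}$ and as $-Y_\mu$ on $\mathcal H_{\bar e}$, and $J_F$ swaps these two subspaces, the cross-terms $\langle J_M\tilde\chi, \gamma^\mu(\ldots)\tilde\psi\rangle$ pick up a single covariant derivative $\nabla^S_\mu - iY_\mu$ — the sign working out because the $\bar e$-sector contributes $J_M$-conjugated with the opposite sign of $Y_\mu$, and one uses $J_M\gamma^\mu = -\gamma^\mu J_M$ together with the antisymmetry $\mathcal A_D(\xi,\xi') = -\mathcal A_D(\xi',\xi)$ to combine the $\chi$-with-$\bar e$ and $\psi$-with-$e$ contributions into the stated term $-i\langle J_M\tilde\chi,\gamma^\mu(\nabla^S_\mu - iY_\mu)\tilde\psi\rangle$. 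For the $\gamma_5\otimes D_F$ piece: from \eqref{eq:Dirac}, $D_F$ sends $e_L\mapsto \bar d\, e_R$, $e_R\mapsto d\, e_L$, $\bar{e_L}\mapsto d\,\bar{e_R}$, $\bar{e_R}\mapsto \bar d\,\bar{e_L}$, so pairing against $J\tilde\xi$ and using $\gamma_5$ on the chirality components produces exactly the mass-type terms $\langle J_M\tilde\chi_L,\bar d\,\tilde\psi_L\rangle - \langle J_M\tilde\chi_R,d\,\tilde\psi_R\rangle$, the relative minus sign coming from $\gamma_5$ acting with opposite signs on $L^2(M,S)^\pm$.

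The main obstacle I anticipate is bookkeeping the signs and the Grassmann structure correctly: one must be careful that $\tilde\chi,\tilde\psi$ are \emph{anticommuting} (Grassmann-valued) so that the antisymmetry of $\mathcal A_{D_A}$ does not kill the action but rather allows the two "halves" of each term to add rather than cancel, and that the factor $\tfrac12$ in \cref{defn:fermion_act} is exactly absorbed by this doubling. I would handle this by first verifying the bilinear form $\mathcal A_{D_A}(\xi,\xi') = \langle J\xi, D_A\xi'\rangle$ is antisymmetric on $\mathcal H^+$ (which follows from $J^2=-1$, $JD_A = D_A J$, and self-adjointness of $D_A$, all valid in KO-dimension $2$), then note that on Grassmann variables $\tfrac12\mathcal A_{D_A}(\tilde\xi,\tilde\xi)$ equals the "off-diagonal" expression $\mathcal A_{D_A}(\tilde\chi\text{-part},\tilde\psi\text{-part})$ after relabeling, giving the three terms without the $\tfrac12$. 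The remaining work — tracking the $J_M$-conjugations, the chirality projectors, and the $\pm Y_\mu$ signs — is routine once the pairing table for $J_F$ and $D_F$ on the basis is written down.
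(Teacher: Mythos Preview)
Your proposal is correct and follows essentially the same route as the paper's proof: write $\xi\in\mathcal H^+$ in the basis $\{e_L,e_R,\bar{e_L},\bar{e_R}\}$, compute $J\xi$ and the action of each summand of $D_A$ on $\xi$, pair them, and use the (anti)symmetry of the resulting bilinear forms on Grassmann spinors to absorb the factor $\tfrac12$. The only cosmetic difference is that the paper splits $D_A$ into \emph{three} pieces $\sD\otimes\I$, $\gamma^\mu\otimes B_\mu$, $\gamma_5\otimes D_F$ rather than your two, and treats the symmetry term by term at the level of $J_M$ on $L^2(M,S)$ (noting that $\langle J_M\til\chi,\sD\til\psi\rangle$ and $\langle J_M\til\chi,\gamma_5\til\psi\rangle$ are symmetric while $\langle J_M\til\chi,\gamma^\mu Y_\mu\til\psi\rangle$ is antisymmetric), rather than invoking the global antisymmetry of $\mathcal A_{D_A}$ on $\mathcal H^+$; but these are equivalent bookkeeping choices.
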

\begin{proof}
The fluctuated Dirac operator is given by 
\begin{align*}
D_A = \sD\otimes\I + \gamma^\mu\otimes B_\mu + \gamma_5\otimes D_F .
\end{align*}
An arbitrary $\xi\in\mH^+$ has the form of \eqref{eq:right_fermion_ED}, and then we obtain the following expressions:
\begin{align*}
J\xi &= J_M\chi_L\otimes\bar{e_L} + J_M\chi_R\otimes\bar{e_R} + J_M\psi_R\otimes e_L + J_M\psi_L\otimes e_R, \\
(\sD\otimes\I)\xi &= \sD\chi_L\otimes e_L + \sD\chi_R\otimes e_R + \sD\psi_R\otimes\bar{e_L} + \sD\psi_L\otimes\bar{e_R} , \\
(\gamma^\mu\otimes B_\mu)\xi &= \gamma^\mu\chi_L\otimes Y_\mu e_L + \gamma^\mu\chi_R\otimes Y_\mu e_R - \gamma^\mu\psi_R\otimes Y_\mu\bar{e_L} - \gamma^\mu\psi_L\otimes Y_\mu\bar{e_R} , \\
(\gamma_5\otimes D_F)\xi &= \gamma_5\chi_L\otimes \bar de_R + \gamma_5\chi_R\otimes d e_L + \gamma_5\psi_R\otimes d \bar{e_R} + \gamma_5\psi_L\otimes \bar d\bar{e_L} .
\end{align*}
We decompose the fermionic action into the three terms
\begin{align*}
\frac12 \langle J\til\xi,D_A\til\xi\rangle &= \frac12 \langle J\til\xi,(\sD\otimes\I)\til\xi\rangle + \frac12 \langle J\til\xi,(\gamma^\mu\otimes B_\mu)\til\xi\rangle + \frac12 \langle J\til\xi,(\gamma_5\otimes D_F)\til\xi\rangle ,
\end{align*}
and then continue to calculate each term separately. The first term is given by 
\begin{align*}
\frac12 \langle J\til\xi,(\sD\otimes\I)\til\xi\rangle = \frac12 \langle J_M\til\chi_L,\sD\til\psi_R\rangle + \frac12 \langle J_M\til\chi_R,\sD\til\psi_L\rangle + \frac12 \langle J_M\til\psi_R,\sD\til\chi_L\rangle + \frac12 \langle J_M\til\psi_L,\sD\til\chi_R\rangle .
\end{align*}
Using the fact that $\sD$ changes the chirality of a Weyl spinor, and that the subspaces $L^2(M,S)^+$ and $L^2(M,S)^-$ are orthogonal, we can rewrite this term as
\begin{align*}
\frac12 \langle J\til\xi,(\sD\otimes\I)\til\xi\rangle = \frac12 \langle J_M\til\chi,\sD\til\psi\rangle + \frac12 \langle J_M\til\psi,\sD\til\chi\rangle .
\end{align*}
Using the symmetry of the form $\langle J_M\til\chi,\sD\til\psi\rangle$, we obtain
\begin{align*}
\frac12 \langle J\til\xi,(\sD\otimes\I)\til\xi\rangle = \langle J_M\til\chi,\sD\til\psi\rangle = -i \langle J_M\til\chi,\gamma^\mu\nabla^S_\mu\til\psi\rangle .
\end{align*}
Note that the factor $\frac12$ has now disappeared in the result, and this is the reason why this factor is included in the definition of the fermionic action. The second term is given by
\begin{multline*}
\frac12 \langle J\til\xi,(\gamma^\mu\otimes B_\mu)\til\xi\rangle = - \frac12 \langle J_M\til\chi_L,\gamma^\mu Y_\mu\til\psi_R\rangle - \frac12 \langle J_M\til\chi_R,\gamma^\mu Y_\mu\til\psi_L\rangle \\
+ \frac12 \langle J_M\til\psi_R,\gamma^\mu Y_\mu\til\chi_L\rangle + \frac12 \langle J_M\til\psi_L,\gamma^\mu Y_\mu\til\chi_R\rangle .
\end{multline*}
In a similar manner as above, we obtain
\begin{align*}
\frac12 \langle J\til\xi,(\gamma^\mu\otimes B_\mu)\til\xi\rangle = -\langle J_M\til\chi,\gamma^\mu Y_\mu\til\psi\rangle ,
\end{align*}
where we have used that the form $\langle J_M\til\chi,\gamma^\mu Y_\mu\til\psi\rangle$ is anti-symmetric. The third term is given by
\begin{multline*}
\frac12 \langle J\til\xi,(\gamma_5\otimes D_F)\til\xi\rangle = \frac12 \langle J_M\til\chi_L,\bar d\gamma_5\til\psi_L\rangle + \frac12 \langle J_M\til\chi_R,d\gamma_5\til\psi_R\rangle \\
+ \frac12 \langle J_M\til\psi_R,d\gamma_5\til\chi_R\rangle + \frac12 \langle J_M\til\psi_L,\bar d\gamma_5\til\chi_L\rangle .
\end{multline*}
The bilinear form $\langle J_M\til\chi,\gamma_5\til\psi\rangle$ is again symmetric, but we now have the extra complication that two terms contain the parameter $d$, while the other two terms contain $\bar{d}$. Therefore we are left with two distinct terms:
\begin{equation*}
\frac12 \langle J\til\xi,(\gamma_5\otimes D_F)\til\xi\rangle = \langle J_M\til\chi_L,\bar d\til\psi_L\rangle - \langle J_M\til\chi_R,d\til\psi_R\rangle . \qedhere
\end{equation*}
\end{proof}

\begin{remark}
\label{remark:real_mass}
It is interesting to note that the fermions acquire mass terms without being coupled to a Higgs field. However, it seems we obtain a complex mass parameter $d$, where we would desire a real parameter $m$. By simply requiring that our result should be similar to \eqref{eq:coleman}, we will choose $d:=-im$, so that
\begin{align*}
\langle J_M\til\chi_L,\bar d\til\psi_L\rangle - \langle J_M\til\chi_R,d\til\psi_R\rangle = i \big\langle J_M\til\chi,m\til\psi\big\rangle .
\end{align*}
\end{remark}

The results obtained in this section can now be summarized into the following theorem.

\begin{thm}
\label{thm:ED}
The full Lagrangian of the almost commutative manifold
\begin{align*}
M\times F\Sub{ED} = \left( C^\infty(M,\C^2), L^2(M,S)\otimes\C^4, \sD\otimes\I + \gamma_5\otimes D_F, \gamma_5\otimes\gamma_F, J_M\otimes J_F \right) 
\end{align*}
as defined in this section, can be written as the sum of a purely gravitational Lagrangian,
\begin{align*}
\L\Sub{\text{grav}}(g_{\mu\nu}) = \frac1{\pi^2} \L_M(g_{\mu\nu}) + \frac1{4\pi^2} \L_H(g_{\mu\nu}) ,
\end{align*}
and a Lagrangian for electrodynamics,
\begin{align*}
\L\Sub{\text{ED}} = -i\Big( J_M\til\chi,(\gamma^\mu(\nabla^S_\mu - i Y_\mu)-m)\til\psi\Big) + \frac1{6\pi^2} f(0) \F_{\mu\nu}\F^{\mu\nu} .
\end{align*}
\end{thm}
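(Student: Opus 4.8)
The plan is to assemble \cref{thm:ED} by combining the two main computational results already established in this section, namely \cref{prop:spec_act_ED} for the bosonic (spectral) action and \cref{prop:fermion_act_ED} for the fermionic action, and then to reorganize the terms into a gravitational part and an electrodynamics part. First I would recall from \cref{prop:spec_act_ED} that the spectral action gives, up to $O(\Lambda^{-1})$,
$\Tr f(D_A/\Lambda) \sim \frac1{4\pi^2}\int_M \big(4\L_M(g_{\mu\nu}) + \L_Y(Y_\mu) + \L_H(g_{\mu\nu},d)\big)\sqrt{|g|}\,d^4x$,
so the purely bosonic Lagrangian density is $\frac1{\pi^2}\L_M + \frac1{4\pi^2}\L_H + \frac1{4\pi^2}\L_Y$. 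Here I would simply read off $\frac1{4\pi^2}\L_Y = \frac1{4\pi^2}\cdot\frac23 f(0)\F_{\mu\nu}\F^{\mu\nu} = \frac1{6\pi^2}f(0)\F_{\mu\nu}\F^{\mu\nu}$, which is exactly the gauge kinetic term appearing in $\L\Sub{\text{ED}}$, while the $\L_M$ and $\L_H$ terms are both functions of $g_{\mu\nu}$ alone (ignoring the boundary term in $\L_H$) and hence get collected into $\L\Sub{\text{grav}}(g_{\mu\nu})$.

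Next I would take the fermionic action from \cref{prop:fermion_act_ED},
$S_f = -i\big\langle J_M\til\chi,\gamma^\mu(\nabla^S_\mu - iY_\mu)\til\psi\big\rangle + \langle J_M\til\chi_L,\bar d\til\psi_L\rangle - \langle J_M\til\chi_R,d\til\psi_R\rangle$,
and apply the substitution $d := -im$ from \cref{remark:real_mass}, which turns the last two terms into $i\langle J_M\til\chi,m\til\psi\rangle$. Combining with the first term and writing the inner product as the pairing $(\,\cdot\,,\,\cdot\,)$, the integrand becomes $-i\big(J_M\til\chi,(\gamma^\mu(\nabla^S_\mu - iY_\mu) - m)\til\psi\big)$, which is precisely the fermionic part of $\L\Sub{\text{ED}}$. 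At this point the full action functional $S[A,\xi] = S_b[A] + S_f[A,\xi]$ from \cref{defn:fermion_act} is the integral over $M$ of $\L\Sub{\text{grav}}(g_{\mu\nu}) + \L\Sub{\text{ED}}$, which is the claimed decomposition.

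The only genuine content beyond bookkeeping is checking that the pieces sort cleanly into a $Y_\mu$-independent gravitational sector and the remaining electrodynamics sector; this is immediate once one notes that $\L_M$ and $\L_H$ depend only on the metric (and $d$, now fixed to $m$, which is a constant), while every occurrence of the gauge field $Y_\mu$ sits either in $\L_Y$ or inside the minimally-coupled Dirac operator $\gamma^\mu(\nabla^S_\mu - iY_\mu)$ in $S_f$. So I do not expect a real obstacle here: the theorem is essentially a repackaging of \cref{prop:spec_act_ED,prop:fermion_act_ED} together with the reality convention of \cref{remark:real_mass}. The one point requiring a line of care is tracking the overall numerical normalization — the factor $4 = \dim\mH_F$ multiplying $\L_M$, the $\frac1{4\pi^2}$ prefactor from the heat expansion, and the momenta $f_2,f_4,f(0)$ — to confirm that $4\cdot\frac1{4\pi^2}\L_M = \frac1{\pi^2}\L_M$ and $\frac1{4\pi^2}\L_Y = \frac1{6\pi^2}f(0)\F_{\mu\nu}\F^{\mu\nu}$ as stated.
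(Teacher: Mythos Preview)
Your proposal is correct and follows essentially the same approach as the paper: invoke \cref{prop:spec_act_ED,prop:fermion_act_ED}, absorb the numerical prefactors to split off $\L\Sub{\text{grav}}$, apply the choice $d=-im$ from \cref{remark:real_mass}, and rewrite the Hilbert-space inner product as an integral of the pointwise pairing to obtain the fermionic density in $\L\Sub{\text{ED}}$. If anything, your write-up is more explicit about the numerical bookkeeping than the paper's own proof.
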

\begin{proof}
The spectral action $S_b$ and the fermionic action $S_f$ are given by \cref{prop:spec_act_ED,prop:fermion_act_ED}. We shall now absorb all numerical constants into the Lagrangians as well. This immediately yields $\L\Sub{\text{grav}}$. To obtain $\L\Sub{\text{ED}}$, we need to rewrite the fermionic action $S_f$ as the integral over a Lagrangian. The inner product $\langle\;,\;\rangle$ on the Hilbert space $L^2(S)$ is given by 
\begin{align*}
\langle\xi,\psi\rangle = \int_M (\xi,\psi) \sqrt{|g|} d^4x ,
\end{align*}
where the hermitian pairing $(\;,\;)$ is given by the pointwise inner product on the fibres. Choosing $d=-im$ as in \cref{remark:real_mass}, we can then rewrite the fermionic action into
\begin{align*}
S_f &= -\int_M i\Big( J_M\til\chi,\big(\gamma^\mu(\nabla^S_\mu - i Y_\mu)-m\big)\til\psi\Big) \sqrt{|g|} d^4x . \qedhere
\end{align*}
\end{proof}

\section*{Acknowledgements}
\noindent We thank Jord Boeijink and Thijs van den Broek for valuable suggestions and remarks.

%\section{Conclusion}
%\bibliography{./Biblio_Thesis,references}

\newcommand{\noopsort}[1]{}\def\cprime{$'$}

\end{document}